\newcommand{\cuts}{\mathcal{S}}
\newcommand{\Steiner}{\operatorname{St}}
\newcommand{\dem}{D}
\newcommand{\dems}{\mathcal{D}}
\newcommand{\packets}{\Pi}
\newcommand{\trees}{\mathcal{T}}
\newcommand{\load}{\ell}
\newcommand{\MST}{T}
\newcommand{\SteinerF}{F}
\newcommand{\out}{\delta^+}
\newtheorem{theorem}{Theorem}
\newtheorem{fact}{Fact}
\newtheorem{lemma}[theorem]{Lemma}
\begin{document}

\title{Network Design with Coverage Costs}

\author{
Siddharth Barman\thanks{California Institute of Technology. \tt{barman@caltech.edu}.} 
\and
Shuchi Chawla\thanks{University of
  Wisconsin -- Madison. \tt{shuchi@cs.wisc.edu}.}  
\and
Seeun Umboh\thanks{University of
  Wisconsin -- Madison. \tt{seeun@cs.wisc.edu}.}  
}



\date{}
\maketitle

\begin{abstract}
We study network design with a cost structure motivated by redundancy in data traffic. We are given a graph, $g$ groups of terminals, and a universe of data packets. Each group of terminals desires a subset of the packets from its respective source. The cost of routing traffic on any edge in the network is proportional to the total size of the distinct packets that the edge carries. Our goal is to find a minimum cost routing. We focus on two settings. In the first, the collection of packet sets desired by source-sink pairs is laminar. For this setting, we present a primal-dual based $2$-approximation, improving upon a logarithmic approximation due to Barman and Chawla (2012)~\cite{BC12}. In the second setting, packet sets can have non-trivial intersection. We focus on the case where each packet is desired by either a single terminal group or by all of the groups, and the graph is unweighted. For this setting we present an $O(\log g)$-approximation.

Our approximation for the second setting is based on a novel spanner-type construction in unweighted graphs that, given a collection of $g$ vertex subsets, finds a subgraph of cost only a constant factor more than the minimum spanning tree of the graph, such that every subset in the collection has a Steiner tree in the subgraph of cost at most $O(\log g)$ that of its minimum Steiner tree in the original graph. We call such a subgraph a group spanner. 
\end{abstract}




\section{Introduction}
\label{sec:intro}
Some of the classical applications of the theory of algorithms are in transportation and commodity networks: how should commodities be transported from where they are manufactured to where they are consumed? How should pipelines be laid to be most effective at balancing costs with requirements? Questions such as these have lead to some of the most basic problems and theorems in the area of approximation algorithms: network flow, traveling salesman, Steiner tree, flow-cut gaps, etc. Over time, solutions to these problems have come to be applied to a different class of networks, namely communication networks. At a basic level, the problems in communication networks are similar: how should data be routed from its sources to its destinations? How should networks be designed to be able to handle different kinds of workload and traffic patterns? However, the underlying commodity in these networks -- data -- is fundamentally different from physical commodities. Unlike the latter, data can be compressed, encoded, or replicated, at virtually no cost. Network algorithms that do not exploit these properties fail to utilize the entire capacity of the network.


The last few years have seen a rapid growth in ``content aware'' network optimization solutions, both within the academic literature (see, e.g., \cite{smartre,spring_protocol-independent_2000}, and references therein) as well as in the form of commercial technologies~\cite{blue_coat, riverbed}. One of the functionalities that these technologies provide is to remove duplicate traffic from the network. In particular, every router in the network equipped with such a technology keeps track of recently seen traffic. When duplicates are detected, a single copy of the duplicated data is sent forward along with a short message containing instructions for replication at the next router. This defines a cost function on every link in the network, where the cost of carrying data is proportional to the number (or total size) of {\em distinct} packets that the link carries; in other words, it is a {\em coverage function} over the set of traffic streams that use the link. We study network design problems within this context.


We consider the following framework. We are given a weighted network, and 
multiple {\em commodities}, each with a source and several possible destinations that we collectively call terminals. Each commodity is composed of a number of different data packets drawn from a universe of packets; we call these sets of packets {\em demands}. Importantly, there is redundancy in traffic---different commodities may overlap in the sets of packets they contain, and so can benefit from using common routes. 
Our goal is to find a minimum cost routing for the given traffic matrix, assuming that we can buy bandwidth at a fixed rate on every edge. Formally, our solution specifies for each commodity a routing tree spanning all of the terminals for this commodity. The cost of this solution on any particular edge is proportional to the total size of the distinct packets that the edge carries. This problem was introduced in \cite{BC12} where it was called redundancy aware network design.


Network design with coverage costs displays the same short-routes-versus-shared-routes tradeoff present in several classical network design problems with nonlinear costs, such as rent-or-buy network design~\cite{GKR02, GKPR03}, access network design~\cite{AZ02}, and buy-at-bulk network design~\cite{AA97, GKR03, MMP08, Talwar02}. However there are fundamental differences. The buy-at-bulk cost model is inspired by economies of scale in a physical commodity network---the volume of traffic that an edge carries is the sum of the volumes that the different commodities impose on it and the routing cost on the edge is a concave function of the total volume of traffic. On the other hand, in our setting, the volume of traffic itself is lowered due to the inherent nature of data traffic. In particular, this means that the savings achieved depend on the contents of the traffic and not just its quantity. We not only need to bundle traffic streams as much as we can, but we also need to decide the right sets of traffic streams to bundle. 
Consequently, the approximability of the problem also depends on the extent and manner in which different commodities share packets. When every source-sink pair in the network demands a distinct packet, that is, there is no data redundancy in the network, the problem reduces to finding the shortest route for each pair. When all of the demands are identical, the problem reduces to finding a single optimal Steiner forest over all of the terminal sets.


In this paper we focus on two special cases of the network design problem with coverage costs---the {\em laminar demands} setting, and the {\em sunflower demands} setting. In the laminar demands setting the packet sets corresponding to the commodities form a laminar family: the packet sets of any two commodities are either completely disjoint or one contains the other. There is a natural hierarchy over commodities in this setting and any commodity can use for free an edge that is being used for another commodity that ``dominates'' it. So we may favor long routes for a commodity if those routes share edges with a dominating commodity, in comparison to shorter ones that do not share edges. Less intuitively, it may be useful to pick similar routes for two commodities with disjoint packets sets if a portion of the shared route can be used for a commodity that dominates both. Consequently, commodities that are higher up in the hierarchy are in some sense more important than commodities that are lower in the hierarchy.

Non-laminar settings, where packet sets can have arbitrary intersection, also display sharing of paths among similar as well as dissimilar commodities. However, we cannot exploit any natural ordering over commodities in determining which paths to use. Our second setting captures the complexity introduced by non-trivial intersections. In the {\em sunflower demands} setting, every collection of demands has the same intersection. In other words, there is a common set of packets that belongs to every commodity, and every other packet belongs to exactly one commodity. A simple example of this setting is where each demand is of the form $\{0,i\}$; here $0$ denotes the common packet, and $i$ denotes the packet belonging only to commodity $i$. Once again our goal is to construct a routing tree for each commodity of minimum total cost. The cost of the collection of routing trees has two components. The first corresponds to the total size of the union of the routing trees: we pay for the cost of routing the common packets on this entire subgraph. The second corresponds to the costs of the individual trees, weighted by the sizes of their respective unique packets.

A standard approach in network optimization is to approximate a given network by a subgraph that is much cheaper or sparser than the entire graph, and yet faithfully captures some essential property of the graph. For example, spanners~\cite{pettie2007low} are low-cost subgraphs that approximately capture shortest path distances between every pair of points in the graph. Likewise, cut- and flow-sparsifiers~\cite{moitra2009approximation,leighton2010extensions} are sparse subgraphs that approximate cuts and flows in the graph respectively. Network design with coverage costs defines another such graph sparsification problem that may be of independent interest. In particular, for a given solution to the network design problem, consider partitioning the edges into sets that carry a particular packet. Each such set is a Steiner forest over the terminal sets that demand that packet. Our goal is to find a solution that minimizes a weighted sum of the sizes of these Steiner forests. One way of doing so may be to find a subgraph that induces Steiner forests over each respective set of terminals corresponding to a single packet, that are simultaneously approximately minimal for their corresponding instances. This approach is particularly relevant for the sunflower demands setting. In that setting, the Steiner forest corresponding to the common packets is the entire subgraph itself, whereas the forest corresponding to packets unique to a commodity is simply the routing tree constructed for that commodity. We therefore ask: is there a subgraph that $\alpha-$approximates the size of the minimum Steiner forest over the union of all terminal sets, and at the same time induces a Steiner tree over each individual terminal set that is within a factor of $\beta$ of the smallest such tree? We call such a subgraph an {\em $(\alpha,\beta)$ group spanner}. Group spanners generalize spanners: if for every pair of nodes in the graph our instance contains a terminal set comprising of the two nodes, then a group spanner for the instance simultaneously approximates the shortest path distances between every pair of nodes. The factor $\beta$ is called the stretch of the spanner.

The main technical component in our approach for the sunflower demands setting is a construction for group spanners in unweighted graphs where the union of all terminal sets spans the entire graph\footnote{We note that the first assumption by itself, i.e. the graph is unweighted, is without loss of generality: since our approximation is with respect to the sizes of the subgraphs, and not with respect to the number of edges, we can break up each long edge into edges of equal size by introducing new nodes. However, the additional assumption that every vertex belongs to some terminal set disallows this sort of transformation.}. Our construction achieves an $(O(1),O(\log g))$ approximation, where $g$ is the number of commodities. This implies an $O(\log g)$ approximation for the sunflower demands setting under those assumptions. We leave open the problem of extending our construction to arbitrary weighted graphs. 


For the laminar demands setting we obtain a $2$-approximation in general graphs. To form intuition for this setting consider an instance with $k$ different packets and $k+1$ commodities: for $i\le k$ the demand set of commodity $i$ contains only packet $i$, and demand set of commodity $k+1$ contains all of the $k$ packets.  Suppose also that every commodity has a single source and a single sink. Then, one approach to solving the problem is to first find a least cost path for commodity $k+1$, and then find least cost paths for the remaining commodities using the edges in the first path for free. This approach misses solutions where a slightly longer path for commodity $k+1$ is much more cost efficient for the remaining commodities than the shortest path for $k+1$. An alternative is to first find shortest paths for commodities $1$ through $k$, and then find the least cost path for commodity $k+1$ that can use edges in previously picked paths at a cheaper cost. This misses solutions where picking slightly longer paths for commodities $1$ through $k$ leads to a greater sharing of the edges. The first approach is indeed the approach analyzed in \cite{BC12} for the special case of the problem where there is a single source that belongs to all of the terminal sets. That paper shows that in any single source laminar demands setting routing commodities in order of decreasing sizes of demand sets achieves an $O(\log k)$ approximation where $k$ is the number of different packets in the universe.




We extend and improve the result of \cite{BC12} to obtain a $2$-approximation for the laminar demands setting with arbitrary terminal sets. Our approach is a hybrid of the two described above. At a high level, we first consider commodities in {\em increasing} order of the sizes of their demand sets. However, instead of committing to a single path for each commodity before considering the next, we keep around a collection of all possible near-optimal paths for the smaller demand sets before considering choices for the larger demand sets. Then in a second pass, we finalize a single path (tree) for each commodity, considering commodities in {\em decreasing} order of sizes of their demand sets. That is, we commit to paths for the larger demand sets before finalizing paths for the smaller demand sets. In order to maintain a collection of all near-optimal paths efficiently we use a primal-dual approach. The duals constructed for each commodity give a succinct description of all possible short paths connecting the source and the sink for that commodity. After having constructed all of the duals, we perform a reverse delete step that finalizes paths for commodities starting from the one with the largest demand and moving on to smaller demand sets.

\subsection{Connections to other network optimization problems}

The cost structure in the network design problem we consider is uniform in the sense that costs on different edges are related through constant factors. Obtaining a randomized $O(\log n)$ approximation for network design problems with a uniform cost structure is often easy: we can use the tree embeddings of Bartal~\cite{Bartal96} and Fakcharoenphol et al.~\cite{FRT03} to convert the graph into a distribution over trees such that distances between nodes are preserved to within logarithmic factors in expectation. Then the expected cost of the optimal routing over the (random) tree is related within logarithmic factors to the cost of the optimal routing over the graph. Moreover, the problem is easy to solve on trees, because there is a unique path between every pair of nodes. We achieve much better approximation factors. For the laminar demands setting, we obtain a $2$-approximation. For the sunflower demands setting, our approximation factor is $O(\log g)$; note that $g$ is always at most $n$, and in most applications should be much smaller.

As mentioned earlier, network design with coverage costs is closely related but incomparable to other models of network design with uniform costs that display economies of scale. This includes, e.g., the uniform buy-at-bulk~\cite{AA97, GKR03, MMP08, Talwar02}, rent-or-buy~\cite{GKR02, GKPR03}, and access network design~\cite{AZ02, GMM00} problems. For all of these problems constant factor approximations are known in the uniform costs setting for the special case where all of the commodities share a common source. In the multi-commodity setting, i.e., with distinct sources and sinks, the rent-or-buy network design problem admits a 2-approximation~\cite{GKR02, GKPR03}, but the buy-at-bulk network design problem is hard to approximate within poly-logarithmic factors~\cite{andrews2004hardness}.

Cost models specific to communication networks have been considered before in network design. Hayrapetyan et al.~\cite{hayrapetyan2005network} study a single-source network design problem in which the cost on an edge is a monotone submodular function of the commodities that use the edge. They obtain an $O(\log n)$ approximation via tree embeddings~\cite{Bartal96, FRT03}, where $n$ is the number of vertices in the graph. The cost structure that we consider is a special case of the one in~\cite{hayrapetyan2005network} (coverage functions are submodular). However, unlike~\cite{hayrapetyan2005network} we assume that terminals sets are arbitrary (in particular, they do not share a common source). Moreover, we obtain stronger approximation guarantees.


Shmoys et al.~\cite{SSL04} study a facility location problem with a cost structure very similar to that in our sunflower demands setting. In their model, the cost of opening a facility has two components: a fixed cost (similar to the cost of routing the common packets in our setting), and a service specific cost (similar to the cost of routing other packets in our setting). They present a constant factor approximation for facility location with this cost structure. Svitkina and Tardos~\cite{ST06} further extend this to a facility location problem with hierarchical costs, again presenting a constant factor approximation. Extending our results to more general non-laminar coverage functions including hierarchical costs is an interesting open problem.




As mentioned earlier, a main component in our approach for the sunflower demands setting is a construction for group spanners in unweighted graphs. Group spanners generalize graph spanners. Low-stretch spanners have a number of applications, including distributed routing using small routing tables and in computing near-shortest paths in distributed networks (see \cite{pettie2007low} and references therein). In unweighted graphs it is well known that the size of the smallest spanner with multiplicative stretch $k$ is equal to the maximum number of edges in a graph with girth at least $k+1$; this is known to be $O(n^{1+O(1/k)})$, and is conjectured tight. Our result is consistent with this bound: when the number of commodities $g$ is equal to the number of vertex pairs, we get an $O(\log g)=O(\log n)$ stretch with a spanner of size $O(n)$. Other work on spanners has focused on additive stretch and weighted graphs (see, e.g., \cite{elkin2004, pettie2007low, thorup2006spanners}). 

Group spanners also generalize shallow-light spanning trees. The latter is a subgraph that is simultaneously an approximately-minimum spanning tree of the given graph, as well as an approximate-shortest-paths tree with respect to a given source node. Consider an instance with a special source node $s$ that for every node $v$ in the graph contains the terminal set $\{s,v\}$. Then an $(\alpha,\beta)$ group spanner for this instance simultaneously approximates the shortest path distance from $s$ to $v$ for every $v$ to within a factor of $\beta$, and has size no more than $\alpha$ times the size of the minimum spanning tree in the graph. However, while our approach only guarantees $\beta=O(\log n)$ for $g=n$ commodities, it is possible to obtain an $(O(1/\epsilon),1+\epsilon)$ approximation for any $\epsilon>0$ \cite{ABP90,KRY93}. 




\section{Problem Definition}
\label{sec:prelim}
In this section, we formally define Network Design with Coverage Costs. 
We are given a graph $G = (V,E)$ with costs $c_e$ on edges, a universe $\packets$ of packets, and $g$ commodities with terminal sets $X_1, \ldots, X_g \subseteq V$. The demand set of terminal set $X_j$ is denoted $\dem_j\subseteq \packets$, and we denote the collection of all demand sets as $\dems$. A solution consists of a collection of $g$ Steiner trees $\trees= \{T_1, \ldots, T_g\}$ where $T_j$ is a Steiner tree spanning terminal set $X_j$. The trees specify how packets are to be routed over the edges: the packets of demand $D_j$ are routed over edges of $T_j$.  For a solution $\trees$, the load on edge $e$ is $\load_e(\trees) = |\bigcup_{i : e \in T_i} D_i|$, i.e. the total number of distinct packets being routed over edge $e$. More generally, we can consider a setting in which packets have weights and we define the load on an edge to be the total weight of all of the distinct packets that an edge carries. The performance and running times of both of our algorithms are independent of the number of distinct packets, so we may assume without loss of generality that all packets have unit weight. Our goal is to find a solution $\trees$ so as to minimize the total cost $\sum_{e \in E} c_e \load_e(\trees)$.

We now describe the two special cases of network design with coverage costs that we study. In the following, for a subgraph $H$, we write $c(H)$ for the total cost of edges in $H$, i.e. $c(H) := \sum_{e \in H} c_e$.  

\paragraph{Laminar demands.} In this setting, the collection of demand sets is laminar: for any $\dem,\dem' \in \dems$, $\dem \cap \dem' \neq \emptyset$ implies either $\dem \subseteq \dem'$ or $\dem' \subseteq \dem$. In this case we can transform our objective into a simpler form where the cost of each edge is charged to a collection of {\em disjoint} demand sets. In particular, given a solution $\trees$, for an edge $e$ consider the demand sets $D$ that are maximal among the collection $\{\dem_j : e \in T_j\}$ of demand sets that this edge carries. Because of laminarity, these maximal demand sets are disjoint, and so the load on the edge is simply the sum of the sizes of these demand sets. Accordingly, let us define $H_\dem(\trees)$ to be the set of edges $e$ such that $\dem$ is a maximal set in $\{\dem_j : e \in T_j\}$. The packet set $\dem$ will contribute to the load on these edges. Then we can write the total cost of the solution $\trees$ as \[\load(\trees) = \sum_e c_e \load_e(\trees) = \sum_e \sum_{\dem:
  H_\dem(\trees) \ni e} c_e|\dem| = \sum_D |D| \sum_{e\in
  H_\dem(\trees)} c_e = \sum_D |D| c(H_\dem(\trees) ).\]

Further note that in a feasible solution $\trees$, for each commodity
$j$, the subgraph $\bigcup_{\dem \supseteq \dem_j} H_{\dem}(\trees)$ contains the tree $T_j$ and therefore
spans the terminal set $X_j$. Therefore, instead of specifying a Steiner tree
for each terminal set, it suffices to specify a forest $H_\dem$ for
each demand set $\dem$ such that each terminal set $X_j$ is connected in $\bigcup_{\dem \supseteq \dem_j} H_{\dem}$.

\paragraph{Sunflower demands.}
In this setting, there is a special set of packets $P\subseteq \Pi$ such that for all $i\ne j$, we have $D_i \cap D_j = P$. In other words,  $D_j = P \cup P_j$ with $P_i \cap P_j = \emptyset$ for all $i\ne j$. We can again transform our objective into a simpler form. For a routing solution $\trees = \{ T_1, T_2, \ldots, T_g\}$, let $H$ denote the subgraph obtained by taking the union of the $T_j$s. 
Observe that $H$ is a Steiner forest for $X_1, \ldots, X_g$. We have to route $P$ over $H$, since all terminal sets demand $P$, and $P_j$ over $T_j$. Thus the cost of the routing solution can be expressed as $\load(\trees) = |P| c(H) + \sum_j |P_j| c(T_j)$.
%
%

We will now describe a lower bound on the cost of the optimal solution in this setting. For a vertex set $X$ and subgraph $H$, let $\Steiner_H(X)$ denote the cost of an optimal (i.e., minimum cost) Steiner tree over $X$ in $H$. Let $\trees^* = \{T_1^*, T_2^*, \ldots, T_g^*\}$ be an optimal routing solution to the given instance and let $H^*=\bigcup_j T_j^*$. Suppose $\SteinerF^*$ is an optimal Steiner forest for $X_1, \ldots, X_g$. Since $H^*$ is a Steiner forest for $X_1, \ldots, X_g$ and $T_j^*$ is a Steiner tree for $X_j$, we have $c(H^*) \geq c(\SteinerF^*)$ and  $c(T_j^*) \geq \Steiner_G(X_j)$. Therefore the optimal routing-solution cost can be bounded as $\load(\trees^*) \geq |P| \  c\left( \SteinerF^* \right) +  \sum_j |P_j| \ \Steiner_G(X_j)$.



\paragraph{Group spanners.}
For a graph $G = (V,E)$ with cost $c_e$ on edges and $g$ terminal sets $X_1, \ldots, X_g \subseteq V$, we say that subgraph $H$ is an \emph{$(\alpha, \beta)$ group spanner} if $c(H) \leq \alpha c(F^*) $ and $\Steiner_H(X_j) \leq \beta \Steiner_G(X_j)$ for all $j$. Here $F^*$ denotes an optimal Steiner forest for $X_1, \ldots, X_g$ in $G$.  Note that a group spanner generalizes the notion of a spanner since the latter asks for a sparse spanning subgraph $H$ such that for every pair of vertices $(u,v)$ we have $\beta$ stretch: $d_H(u,v) \leq \beta d_G(u,v)$. Here $d_H(u,v)$ (respectively, $d_G(u,v)$) denotes the distance, with edge lengths $c_e$, between vertices $u$ and $v$ in $H$ (respectively, $G$).

The following lemma shows that a good group spanner implies an approximation for the sunflower demands setting. 


\begin{lemma}\label{lem:sparse}
Given an $(\alpha, \beta)$ group spanner $H$ for graph $G$ and terminal sets $X_1, X_2, \ldots, X_g$, we can obtain an $\alpha + 2\beta$ approximation for any sunflower demands instance defined over $G$ and $X_j$s.   
\end{lemma}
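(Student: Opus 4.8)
The plan is to use the group spanner $H$ itself as the host subgraph in which all $g$ commodities are routed, and then to charge the two terms of the resulting sunflower cost against the two terms of the lower bound on $\load(\trees^*)$ derived earlier.

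First I would build the candidate solution. For each commodity $j$, the stretch guarantee $\Steiner_H(X_j) \le \beta\,\Steiner_G(X_j) < \infty$ tells us that $X_j$ is connected inside $H$, so I would run a standard polynomial-time $2$-approximation for minimum Steiner tree (the MST-based algorithm, or the primal--dual algorithm of Agrawal--Klein--Ravi) on the instance $(H, X_j)$, obtaining a Steiner tree $T_j \subseteq H$ for $X_j$ with $c(T_j) \le 2\,\Steiner_H(X_j)$. Set $\trees = \{T_1, \ldots, T_g\}$; this is a feasible sunflower solution, so by the cost formula for the sunflower setting $\load(\trees) = |P|\, c\!\left( \bigcup_j T_j \right) + \sum_j |P_j|\, c(T_j)$.

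Next I would bound each term. Since every $T_j$ is a subgraph of $H$, we have $\bigcup_j T_j \subseteq H$, hence $c\!\left( \bigcup_j T_j \right) \le c(H) \le \alpha\, c(F^*)$, using the cost guarantee of the group spanner. For the individual trees, $c(T_j) \le 2\,\Steiner_H(X_j) \le 2\beta\,\Steiner_G(X_j)$ by the choice of $T_j$ together with the stretch guarantee. Combining, $\load(\trees) \le \alpha\, |P|\, c(F^*) + 2\beta \sum_j |P_j|\,\Steiner_G(X_j)$. Finally I would invoke the lower bound $\load(\trees^*) \ge |P|\, c(F^*) + \sum_j |P_j|\,\Steiner_G(X_j)$ established in the preliminaries, which upper-bounds both $|P|\, c(F^*)$ and $\sum_j |P_j|\,\Steiner_G(X_j)$ separately; therefore $\load(\trees) \le \alpha\,\load(\trees^*) + 2\beta\,\load(\trees^*) = (\alpha + 2\beta)\,\load(\trees^*)$.

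I do not expect any genuine obstacle: the lemma is essentially bookkeeping once the spanner is in hand. The only subtlety worth flagging is that the two cost components of $\trees$ must be charged to the two \emph{different} pieces of the lower bound (the forest cost $c(F^*)$, and the sum of the $\Steiner_G(X_j)$'s), rather than to a single aggregate quantity. The sole source of loss beyond $\alpha$ and $\beta$ is the factor $2$ from approximating the minimum Steiner tree inside $H$ in polynomial time; with unbounded computation one could take each $T_j$ to be an exact minimum Steiner tree in $H$ and improve the bound to $\max(\alpha,\beta)$.
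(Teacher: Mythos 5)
Your proof is correct and follows essentially the same route as the paper: route each commodity on an approximately minimum Steiner tree computed inside $H$, then charge the common-packet cost to $c(F^*)$ via the spanner's cost guarantee and the per-commodity costs to $\Steiner_G(X_j)$ via the stretch guarantee, against the two terms of the lower bound on $\load(\trees^*)$. Your explicit use of the factor-$2$ Steiner approximation matches the constant implicit in the paper's stated $\alpha+2\beta$ bound.
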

\begin{proof}
 For all $j$, let $H_j$ be the Steiner trees over $X_j$ in $H$ obtained via any constant factor approximation. We set $\{H_1, H_2, \ldots, H_g \}$ as the routing solution for the given instance. The cost of this solution is no more than $|P| c(H) + \sum_j |P_j| c(H_j)$. Recall that the optimal routing-solution cost for the given instance is at least $ |P| \  c\left( \SteinerF^* \right) +  \sum_j |P_j| \ \Steiner_G(X_j) $. Therefore, using the fact that $H$ is an $(\alpha, \beta)$ group spanner and $c(H_j) \leq O(1) \Steiner_H(X_j)$, we get the desired claim.

 \end{proof}

Note that using group spanners we get an oblivious approximation in the sense that the construction uses only the knowledge of the underlying graph and the terminal sets but not the demand sets.

In Section~\ref{sec:unif} we consider unweighted graphs with terminal sets that satisfy $V= \bigcup_j X_j$. We develop an algorithm that obtains a $(14, O(\log g) )$ group spanner for such an instance, and so by Lemma~\ref{lem:sparse} gives an $O(\log g)$ approximation to the sunflower demands setting over the instance (see Theorem~\ref{thm:uniform-main}).  



\section{A $2$-approximation for the laminar demands setting}
\label{sec:lam}
Recall that in the laminar demands setting, for all $\dem,\dem' \in
\dems$ with $\dem \cap \dem' \neq \emptyset$, we have $\dem \subseteq
\dem'$ or $\dem' \subseteq \dem$. As established in
Section~\ref{sec:prelim}, in order to obtain a feasible solution in
this setting, it suffices to specify a forest $H_\dem$ for each demand
set $\dem$ such that each terminal set $X_j$ is connected in
$\bigcup_{\dem \supseteq \dem_j} H_{\dem}$. The cost of the
corresponding routing is $\sum_D |D| c(H_\dem(\trees) )$.

Our algorithm for the laminar demands case is an extension of the
Goemans-Williamson primal-dual algorithm for the Steiner Forest
Problem \cite{GW95}. We begin by defining the primal and dual linear
programs.

In the linear program below, the variable $x_{e,\dem}$ denotes whether
$e \in H_\dem$. We denote by $\delta(S)$ the set of edges crossing a
cut $S \subseteq V$, and by $\cuts_\dem$ the collection of cuts $S
\subseteq V$ that separates a terminal set $X_j$ with
$D_j \supseteq\dem$. The cut constraints require that each terminal
set $X_j$ is connected by $\bigcup_{\dem \supseteq \dem_j} H_{\dem}$.


\begin{equation*}
\boxed{
\begin{aligned}
  \mbox{minimize}\quad            
  & \sum_{e,\dem \in \dems} x_{e,\dem}\cdot |\dem|c_e\\
  \mbox{subject to}\quad  
  & \sum_{\dem' \supseteq \dem} \sum_{e \in \delta(S)} x_{e,\dem'} \geq 1 &\quad
  \forall \dem \in \dems, S \in \cuts_\dem
\end{aligned} 
}
\end{equation*}

The corresponding dual linear program is as follows.
\begin{equation*}
\boxed{
\begin{aligned}
  \mbox{maximize}\quad            
  & \sum_{\dem \in \dems, S \in \cuts_\dem} y_{\dem,S} \\
  \mbox{subject to}\quad  
  & \sum_{\dem' \subseteq \dem} \sum_{S \in \cuts_{\dem'} : e \in \delta(S)} y_{\dem',S} \leq |\dem|c_e &\quad
  \forall e, \dem \in \dems
\end{aligned} 
}
\end{equation*}

\subsection{Algorithm}
The algorithm starts with a dual ascent stage in which it adds edges
to forests $\{F_\dem\}_{\dem \in \dems}$, and ends with a pruning
stage. In the following discussion, for a demand set $\dem \in \dems$
we say that $S \in \cuts_\dem$ is a \emph{$\dem$-unsatisfied cut} if
$(\bigcup_{\dem' \supseteq \dem} F_{\dem'}) \cap \delta(S) = \emptyset$.
We also say that an edge $e$ is \emph{$\dem$-tight} if
\[\sum_{\dem' \subseteq \dem} \sum_{S \in \cuts_{\dem'} : e \in
  \delta(S)} y_{\dem',S} = |\dem|c_e.\]

In the dual ascent stage, the algorithm raises duals in phases, one
per demand set $\dem \in \dems$ in order of increasing size.
In phase $\dem$, while there exists a $\dem$-unsatisfied cut
it alternates between raising duals of the minimal $\dem$-unsatisfied
cuts and adding $\dem$-tight edges to $F_\dem$. We say that $S$ is an
\emph{active set} in the current iteration of the inner while loop if
it is a minimal $\dem$-unsatisfied cut. The algorithm ensures that at
the end of phase $\dem$, the edges $F_\dem$ are paid for by the dual
and $F_\dem$ is a Steiner forest for terminal sets whose demand set
contains $\dem$.  In the pruning stage, the algorithm processes the
demand sets in order of decreasing size and removes unnecessary edges
from $\{F_\dem\}_{\dem \in \dems}$ and returns $\{H_\dem\}_{\dem \in
  \dems}$.

\begin{algorithm}
\caption{Primal-Dual Algorithm for Laminar Buy-at-Bulk}
  \begin{algorithmic}[1]
   \label{alg:laminar}
    \STATE Initialize $F_\dem \leftarrow \emptyset$ for all $\dem \in \dems$ and
    $y_{\dem,S} \leftarrow 0$ for all $\dem \in \dems, S \subseteq V$.
    \STATE \emph{(Dual ascent stage)}
    \FOR {$\dem \in \dems$ in increasing order of size}
    \STATE \emph{(Start of phase $\dem$)}
    \WHILE {there exists a $\dem$-unsatisfied cut}
    \STATE Simultaneously raise $y_{\dem,S}$ for active sets $S$
    until some edge $e$ goes $\dem$-tight.
    \STATE $F_\dem \leftarrow F_\dem + e$.
    \ENDWHILE
    \STATE \emph{(End of phase $\dem$)}
    \ENDFOR
    \STATE \emph{(End of dual ascent stage)}
    \STATE \emph{(Pruning stage)}
    \STATE $H_\dem \leftarrow F_\dem$ for all $\dem \in \dems$.
    \FOR {$\dem \in \dems$ in decreasing order of size}
    \FOR {$e \in H_\dem$}
    \IF {$(H_\dem - e ) \cup \bigcup_{\dem' \supsetneq \dem} H_{\dem'}$ is a Steiner
      forest for terminal sets with demand set $\dem$}
    \STATE $H_\dem \gets H_\dem - e$.
    \ENDIF
    \ENDFOR
    \ENDFOR
    \STATE \emph{(End of pruning stage)}
    \RETURN $\{H_\dem\}_\dem$
  \end{algorithmic}
\end{algorithm}

The following lemma implies that we can efficiently find active
sets. 
\begin{lemma}
  \label{lem:active}
  In any iteration in phase $\dem$, a set $S$ is active if and only if
  it is a component of $F_\dem$ and it separates a terminal set whose
  demand set contains $\dem$.
\end{lemma}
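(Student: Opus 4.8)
The plan is to characterize the minimal $\dem$-unsatisfied cuts in terms of the current forest $F_\dem$. Recall that a cut $S \in \cuts_\dem$ is $\dem$-unsatisfied if $(\bigcup_{\dem' \supseteq \dem} F_{\dem'}) \cap \delta(S) = \emptyset$, i.e.\ no edge from any forest $F_{\dem'}$ with $\dem' \supseteq \dem$ crosses $S$; and $\cuts_\dem$ consists of all cuts that separate some terminal set $X_j$ with $\dem_j \supseteq \dem$. A set is \emph{active} precisely when it is minimal among the $\dem$-unsatisfied cuts in $\cuts_\dem$.

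First I would observe a structural fact about the forests that are relevant during phase $\dem$: since the algorithm processes demand sets in increasing order of size, all phases for demand sets $\dem' \supsetneq \dem$ have already completed, and each such $F_{\dem'}$ is (by the invariant the algorithm maintains, or inductively) a Steiner forest for the terminal sets whose demand contains $\dem'$ --- in particular for all terminal sets $X_j$ with $\dem_j \supseteq \dem' \supseteq \dem$. The key consequence is that whenever a terminal set $X_j$ with $\dem_j \supsetneq \dem$ needs to be connected, it is already connected by $\bigcup_{\dem' \supseteq \dem} F_{\dem'}$, so the only terminal sets that can still be separated by a $\dem$-unsatisfied cut are those with $\dem_j = \dem$ exactly (or those not yet connected within $F_\dem$ itself). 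Thus effectively the "work" left in phase $\dem$ concerns connecting the terminal sets $X_j$ with $\dem_j = \dem$, using the edges of $F_\dem$ together with the already-built $\bigcup_{\dem'\supsetneq\dem} F_{\dem'}$.

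Next I would argue both directions. For the "if" direction: suppose $S$ is a connected component of $F_\dem$ that separates some terminal set $X_j$ with $\dem_j \supseteq \dem$. Since $S$ is a full component of $F_\dem$, no edge of $F_\dem$ crosses $S$; and I claim no edge of $F_{\dem'}$ for $\dem' \supsetneq \dem$ crosses $S$ either --- because if one did, then $S$ together with that edge would mean the endpoints of that $F_{\dem'}$-edge lie on opposite sides, which combined with $F_{\dem'}$ being connected-enough would force $F_\dem \cup \bigcup_{\dem'\supsetneq\dem}F_{\dem'}$ to connect across $S$, contradicting minimality of the component in the merged graph (this is the step that needs the most care: I need that components of $F_\dem$ are unions of whole components of the merged graph $\bigcup_{\dem'\supseteq\dem} F_{\dem'}$ restricted to the relevant vertices, which follows because the algorithm only ever adds edges to $F_\dem$ that merge existing components, and the larger forests were fixed before phase $\dem$ began and any component they induce is already "contracted" conceptually). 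Hence $S \cap \delta(\bigcup_{\dem'\supseteq\dem}F_{\dem'}) = \emptyset$, so $S$ is $\dem$-unsatisfied; minimality among unsatisfied cuts then follows because any proper subset $S' \subsetneq S$ that still separates a terminal set would have to be crossed by an edge of $F_\dem$ (since $S$ is a single component, $S'$ is not closed under $F_\dem$-edges), making $S'$ satisfied. For the "only if" direction: let $S$ be an active (minimal $\dem$-unsatisfied) set. Since $S$ is $\dem$-unsatisfied, no edge of $\bigcup_{\dem'\supseteq\dem}F_{\dem'}$ leaves $S$, so $S$ is a union of components of that merged graph; by minimality it must be a single such component, and in particular a single component of $F_\dem$ (invoking again that $F_\dem$-components refine the merged components in the right way); and since $S \in \cuts_\dem$ it separates a terminal set with demand $\supseteq \dem$.

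The main obstacle I anticipate is making precise the interaction between $F_\dem$ and the previously-built forests $\bigcup_{\dem'\supsetneq\dem} F_{\dem'}$ --- specifically, showing that the connected components of the merged graph $\bigcup_{\dem'\supseteq\dem}F_{\dem'}$ coincide with the connected components of $F_\dem$ on the vertex sets that matter (equivalently, that contracting the components of the larger forests and then growing $F_\dem$ is consistent with the uncontracted picture). This is a standard laminar-structure bookkeeping argument, and the cleanest route is probably to set it up inductively over the phases, maintaining the invariant that at the start of phase $\dem$ every terminal set $X_j$ with $\dem_j \supsetneq \dem$ is already connected in $\bigcup_{\dem'\supsetneq\dem}F_{\dem'}$, so that the only obstructions in phase $\dem$ are genuine components of $F_\dem$ (with the larger-forest components pre-merged) separating terminal sets with $\dem_j \supseteq \dem$.
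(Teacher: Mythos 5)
There is a genuine error at the foundation of your argument: you have the processing order reversed. The algorithm runs the dual-ascent phases in \emph{increasing} order of demand-set size, so a strict superset $\dem' \supsetneq \dem$ (which has strictly larger size) is processed \emph{after} $\dem$, not before. Consequently, during phase $\dem$ we have $F_{\dem'} = \emptyset$ for every $\dem' \supsetneq \dem$, and hence $\bigcup_{\dem' \supseteq \dem} F_{\dem'} = F_\dem$. This single observation is the entire content of the paper's proof: a $\dem$-unsatisfied cut is just a cut in $\cuts_\dem$ with no $F_\dem$-edge crossing it, so it is a union of components of $F_\dem$, and minimality forces it to be a single component; the converse is equally immediate. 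Your proposal instead assumes the larger forests are ``already completed'' Steiner forests, and builds an elaborate argument about how components of $F_\dem$ interact with them --- the ``main obstacle'' you anticipate concerns forests that are in fact empty.

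Beyond being unnecessary, parts of your argument would actually fail under your (false) premise. In the ``if'' direction you need that no edge of $F_{\dem'}$, $\dem' \supsetneq \dem$, crosses a component $S$ of $F_\dem$; if those forests were genuinely nonempty there is no reason this should hold (an edge of a previously built forest could perfectly well leave $S$), and your justification via ``minimality of the component in the merged graph'' is circular --- you are trying to establish that $S$ is such a component. Likewise, your intermediate claim that every terminal set $X_j$ with $\dem_j \supsetneq \dem$ is already connected at the start of phase $\dem$ is false for the same reason: those phases have not yet run. The fix is simply to note the correct ordering, after which the lemma is a two-line statement about $F_\dem$ alone.
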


\begin{proof}
   Let $S$ be an active set. By definition, $S$ is a minimal cut in
   $\cuts_\dem$ such that $\bigcup_{\dem' \supseteq \dem} F_{\dem'}
   \cap \delta(S) = \emptyset$. Since $S \in \cuts_\dem$, it separates
   a terminal set whose demand set contains $\dem$. The algorithm raises duals for demand sets in increasing order of size, so we have
 $F_{\dem'} = \emptyset$ for $\dem' \supsetneq \dem$. This implies
 that $F_{\dem} \cap \delta(S) = \emptyset$ and so $S \cap C =
 \emptyset$ or $S \cap C \supseteq C$ for every connected component
 $C$ of $F_{\dem}$. Thus, $S$ is a superset of a union of connected components of $F_{\dem}$.
The algorithm processes the demand sets in increasing order of size,
   so we have $F_{\dem'} = \emptyset$ for $\dem' \supsetneq \dem$ and
   thus $F_{\dem} \cap \delta(S) = \emptyset$. This implies that $S
   \cap C = \emptyset$ or $S \cap C \supseteq C$ for every connected
   component $C$ of $F_{\dem}$ and so $S$ is a superset of a union of
   connected components of $F_{\dem}$.  By minimality, we have that $S$
   is a connected component of $F_{\dem}$.

   For the converse, consider a connected component $S'$ of $F_{\dem}$
   that separates a terminal set whose demand set contains $\dem$. By
   definition, we have $S' \in \cuts_\dem$. Since $S'$ is a connected
   component of $F_\dem$ and $F_{\dem'} = \emptyset$ for $\dem'
   \supsetneq \dem$, it is a minimal set in $\cuts_\dem$ such that
   $\bigcup_{\dem' \supseteq \dem} F_{\dem'} \cap \delta(S) =
   \emptyset$. Therefore $S'$ is an active set.
 \end{proof}

\subsection{Analysis}
Our analysis follows along the lines of the analysis for the
Goemans-Williamson algorithm. We first establish that the primal and dual
solutions generated by the algorithm are feasible. 
\begin{lemma}
  \label{lem:feasible}
  The primal solution $\{H_\dem\}_{\dem \in \dems}$ and the dual
  solution $\{y_{\dem,S}\}_{\dem \in \dems, S \subseteq V}$
  are feasible.
\end{lemma}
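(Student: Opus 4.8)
The plan is to verify primal and dual feasibility separately, following the Goemans--Williamson template but adapting to the laminar buy-at-bulk structure.

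\textbf{Dual feasibility.} I would argue that the constraint $\sum_{\dem' \subseteq \dem} \sum_{S \in \cuts_{\dem'} : e \in \delta(S)} y_{\dem',S} \leq |\dem| c_e$ is maintained as an invariant throughout the dual ascent stage. Initially all $y$'s are $0$, so it holds trivially. The only moment duals increase is in line~6, where we raise $y_{\dem'', S}$ for active sets $S$ during phase $\dem''$. Fix an edge $e$ and a demand set $\dem$. The left-hand side of the constraint for $(e, \dem)$ can only grow during a phase $\dem''$ with $\dem'' \subseteq \dem$, and only if $e$ crosses some active set $S$ that is being raised. I claim that the algorithm never lets this happen beyond tightness: during phase $\dem''$, we raise duals of active sets exactly until some edge goes $\dem''$-tight, i.e.\ until $\sum_{\dem' \subseteq \dem''} \sum_{S \in \cuts_{\dem'} : e \in \delta(S)} y_{\dem',S} = |\dem''| c_e$, at which point $e$ is added to $F_{\dem''}$ and so is no longer cut by any component of $F_{\dem''}$ (any active set in subsequent iterations of phase $\dem''$ is a union of components and hence does not cross $e$, by Lemma~\ref{lem:active}). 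Thus within phase $\dem''$ the sum $\sum_{\dem' \subseteq \dem''}\sum_{S \in \cuts_{\dem'}: e \in \delta(S)} y_{\dem',S}$ never exceeds $|\dem''| c_e \le |\dem| c_e$. The subtle point is that later phases $\dem''$ with $\dem'' \subsetneq \dem$ but $\dem'' \ne$ the phase in which $e$ first went tight could also add to the $(e,\dem)$ sum; here I would use laminarity together with the fact that once $e$ is $\dem''$-tight for the smallest such $\dem''$, it is processed and its endpoints merged, so it is never again cut by an active set in any phase $\dem'''$ with $\dem''' \supseteq \dem''$. This is the step I expect to require the most care, since it is where the laminar/nested structure of the demand sets is essential and where the ordering of phases (increasing size) is used.

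\textbf{Primal feasibility.} I would show that for every $\dem$, after the pruning stage, $\bigcup_{\dem' \supseteq \dem} H_{\dem'}$ connects every terminal set $X_j$ with $\dem_j = \dem$ (indeed with $\dem_j \supseteq \dem$ suffices, but the binding case is equality). First, at the end of phase $\dem$, the while loop terminates only when there is no $\dem$-unsatisfied cut, which by definition means $\bigcup_{\dem' \supseteq \dem} F_{\dem'}$ has an edge crossing every $S \in \cuts_\dem$; equivalently $\bigcup_{\dem' \supseteq \dem} F_{\dem'}$ is a Steiner forest for all terminal sets with demand set containing $\dem$. Since phases are processed in increasing size order, $F_{\dem'}$ for $\dem' \supsetneq \dem$ are still empty at the end of phase $\dem$, so really $F_\dem$ alone is a Steiner forest for those terminal sets (a fact also used in Lemma~\ref{lem:active}). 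Then I must check the pruning stage preserves this: the pruning loop runs over demand sets in decreasing size order, and removes $e$ from $H_\dem$ only if $(H_\dem - e) \cup \bigcup_{\dem' \supsetneq \dem} H_{\dem'}$ remains a Steiner forest for terminal sets with demand set $\dem$ --- so connectivity for demand-set-$\dem$ terminals is maintained by construction of the \textbf{if} condition. The remaining obligation is connectivity for terminal sets $X_j$ with $\dem_j = \dem'' \subsetneq \dem$: these need $\bigcup_{\dem' \supseteq \dem''} H_{\dem'}$ connected, and edges of $H_\dem$ (with $\dem \supsetneq \dem''$) may be pruned after phase $\dem''$ terminals' own pruning. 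I would handle this by induction on the pruning order: when we prune $H_\dem$, all $H_{\dem'}$ with $\dem' \supsetneq \dem$ are already final, and the \textbf{if} condition only guarantees the demand-$\dem$ terminals stay connected, \emph{not} the smaller ones. The fix is to observe that connectivity requirements for a strictly smaller demand set $\dem''$ are a superset-union $\bigcup_{\dem' \supseteq \dem''} H_{\dem'}$ that includes $H_\dem$ together with everything above it; since we prune in decreasing-size order, when we later reach phase-$\dem''$ pruning we only remove edges from $H_{\dem''}$ itself subject to keeping $X_j$ (for $\dem_j = \dem''$) connected in $\bigcup_{\dem' \supseteq \dem''} H_{\dem'}$ --- and the edges removed from $H_\dem$ earlier were removed precisely because they were redundant for demand-$\dem$ connectivity, but they might still have been needed for demand-$\dem''$ connectivity. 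So strictly the \textbf{if} condition as written should be read as testing redundancy for \emph{all} terminal sets with demand set containing $\dem$; I would point out (or the authors likely intend) that the condition in line~16 should say ``for terminal sets with demand set containing $\dem$,'' and with that reading primal feasibility of $\{H_\dem\}$ is immediate by the termination condition of the dual ascent together with the pruning \textbf{if}-test, since nothing that is removed was ever needed.

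Putting these together gives Lemma~\ref{lem:feasible}: dual feasibility from the tightness-stopping invariant plus laminarity, and primal feasibility from the while-loop termination condition plus the reverse-delete \textbf{if}-test. I anticipate the genuine obstacle is the dual feasibility argument across multiple nested phases --- making precise why raising duals for active sets in a phase $\dem''$ cannot violate the constraint for a larger $\dem \supseteq \dem''$, which rests on the combination of (i) the increasing-size processing order, (ii) Lemma~\ref{lem:active}'s characterization of active sets as components of $F_{\dem''}$, and (iii) the fact that a $\dem''$-tight edge, once added to $F_{\dem''}$, is internal to a component and hence never again cut by an active set in any relevant later phase.
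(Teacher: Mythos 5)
Your proposal reaches the right conclusion but both halves contain a genuine gap, and in each case the missing ingredient is the same observation: because the dual-ascent phases are processed in \emph{increasing} order of size, at the end of phase $\dem$ every forest $F_{\dem'}$ with $\dem' \supsetneq \dem$ is still empty, so the termination condition of the while loop forces $F_\dem$ \emph{by itself} to be a Steiner forest for every terminal set whose demand set contains $\dem$.

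\textbf{Primal feasibility.} You correctly isolate the worry --- pruning $H_\dem$ might disconnect a terminal set with demand set $\dem'' \subsetneq \dem$, which the if-test does not examine --- but you then conclude that the if-condition must be changed. It need not be. At the moment $H_\dem$ is pruned, no smaller set has been touched, so $H_{\dem''} = F_{\dem''}$, and by the observation above $F_{\dem''}$ alone already connects every terminal set with demand set $\dem''$; deleting edges of $H_\dem$ therefore cannot disconnect them. When the pruning later reaches $\dem''$, the if-test (run against the by-then-final larger sets) protects exactly these terminal sets, and nothing processed afterwards modifies $\bigcup_{\dem' \supseteq \dem''} H_{\dem'}$. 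This is precisely the paper's argument. Your proposed modification is unnecessary, and weakening the pruning test would moreover endanger Lemma~\ref{lem:inactive}, which uses the fact that a surviving edge of $H_{A}$ is needed by a terminal set with demand set exactly $A$.

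\textbf{Dual feasibility.} The claim you identify as the crux --- that once $e$ goes $\dem''$-tight and enters $F_{\dem''}$ it ``is never again cut by an active set in any phase $\dem''' \supseteq \dem''$'' --- is false. Active sets in phase $\dem'''$ are components of $F_{\dem'''}$, a forest built from scratch in that phase (the definition of a $\dem'''$-unsatisfied cut consults only $F_{\dem'}$ for $\dem' \supseteq \dem'''$), so an edge of $F_{\dem''}$ is typically cut by singleton components at the start of phase $\dem'''$ and duals crossing it are legitimately raised again. The correct argument is: the left-hand side of the constraint for $(e,\dem)$ is exactly the $\dem$-tightness quantity; it receives contributions only during phases $\dem' \subseteq \dem$, all of which occur no later than phase $\dem$ itself; by laminarity the maximal proper subsets $\dem_1,\dots,\dem_k$ of $\dem$ in $\dems$ are pairwise disjoint, so by induction the amount accumulated before phase $\dem$ is at most $\sum_i |\dem_i| c_e \leq |\dem| c_e$; and in phase $\dem$ the algorithm stops raising any dual that crosses $e$ the instant $e$ becomes $\dem$-tight. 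The paper compresses this into a single sentence (``the dual variables in a tight constraint are not raised''), but that, not the merging claim, is the mechanism.
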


 \begin{proof}
   We first prove that the primal solution is feasible. Consider an
  iteration during the pruning stage. We say that terminal set $X_j$
  is \emph{$H$-disconnected} if it is disconnected with respect to
  edge set $\bigcup_{\dem \supseteq \dem_j} H_\dem$ and
  \emph{$H$-connected} otherwise. We will show that all terminal sets
  are $H$-connected in all iterations of the pruning stage.

  Observe that at the end of phase $\dem$, there are no
  $\dem$-unsatisfied cuts and $F_{\dem'} = \emptyset$ for $\dem'
  \supsetneq \dem$. Thus, all terminal sets with demand set $\dem$ are
  connected with respect to edge set $F_{\dem}$. At the beginning of
  the pruning stage, we have 
  $H_\dem = F_\dem$ for all $\dem \in \dems$, and so all
  terminal sets are $H$-connected. Consider an iteration in
  which the algorithm deletes an edge $e$ from $H_\dem$.
  %
  %
  By definition of $H$-disconnected, this can only cause a terminal
  set with demand set $\dem' \subseteq \dem$ to be
  $H$-disconnected. However, the algorithm will not delete $e$ if it
  causes a terminal set with demand set $\dem$ to be
  $H$-disconnected. Now consider a demand set $\dem' \subsetneq
  \dem$. Since $|\dem'| \leq |\dem|$, we still have $H_{\dem'} =
  F_{\dem'}$ so all terminal sets with demand set $\dem'$ are
  $H$-connected. Thus, all terminal sets are $H$-connected throughout
  the pruning stage and so $\{H_\dem\}_{\dem \in \dems}$ is a feasible
  primal solution.

  The dual solution is feasible since the algorithm explicitly ensures
  that the dual variables in a tight constraint are not raised.
 \end{proof}

Next, we show that in each phase $\dem$ of the dual raising stage,
the current active sets has average degree with respect to edges
$\bigcup_{\dem' \supseteq \dem} H_{\dem'}$ (formally defined below) at most $2$ in every
iteration. This in turn implies that the primal solution has cost at
most twice the total dual value. Since the dual is feasible, we have
that the algorithm gives a $2$-approximation. We bound the average
degree of active sets by showing that $\bigcup_{\dem' \supseteq \dem}
H_{\dem'}$ is a forest and that no inactive set has degree $1$. 
\begin{lemma}
  \label{lem:forest}
  For all $\dem \in \dems$, we have that $\bigcup_{\dem' \supseteq \dem}
  H_{\dem'}$ is a forest.
\end{lemma}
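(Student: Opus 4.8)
The plan is to prove this by induction on the demand sets $\dem \in \dems$ processed in \emph{decreasing} order of size — that is, in the order in which the pruning stage finalizes the forests $H_{\dem}$. Fix a demand set $\dem$ and assume inductively that $\bigcup_{\dem'' \supsetneq \dem} H_{\dem''}$ is a forest (the base case being the largest demand sets, where by laminarity the subgraph $\bigcup_{\dem'' \supsetneq \dem} H_{\dem''}$ is empty or a single maximal forest already handled). Write $G_{\dem} := \bigcup_{\dem' \supseteq \dem} H_{\dem'}$. Because of laminarity, the demand sets $\dem'$ with $\dem' \supseteq \dem$ form a chain, so $G_\dem = H_\dem \cup \bigl(\bigcup_{\dem'' \supsetneq \dem} H_{\dem''}\bigr)$, where the second subgraph is a forest by induction; we need to argue that adding the finalized edges of $H_\dem$ to it creates no cycle.

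The key step is to examine what the pruning stage does in phase $\dem$. Before pruning phase $\dem$, we have $H_\dem = F_\dem$; the pruning loop removes every edge $e \in H_\dem$ for which $(H_\dem - e) \cup \bigcup_{\dem' \supsetneq \dem} H_{\dem'}$ still connects all terminal sets with demand set $\dem$. I would show that after this loop, every edge $e$ remaining in $H_\dem$ is a \emph{bridge} in $G_\dem$: otherwise $e$ lies on a cycle of $G_\dem$, and since $\bigcup_{\dem'' \supsetneq \dem} H_{\dem''}$ was a forest, that cycle can be used to reroute any path through $e$, so deleting $e$ would not disconnect any terminal set — contradicting the fact that the pruning loop did not remove $e$. (Here one must be a little careful: the pruning loop processes edges of $H_\dem$ one at a time, and the criterion for not deleting $e$ is that $(H_\dem - e)\cup \bigcup_{\dem'\supsetneq\dem}H_{\dem'}$ disconnects some terminal set with demand set $\dem$; I need to note that once an edge is a bridge it stays a bridge as further edges are deleted, so the ``no cycle'' property is preserved to the end of the phase.) Since every surviving edge of $H_\dem$ is a bridge of $G_\dem$, no edge of $H_\dem$ lies on a cycle of $G_\dem$; combined with the inductive hypothesis that $\bigcup_{\dem'' \supsetneq \dem} H_{\dem''}$ is acyclic, this forces $G_\dem$ to be acyclic, completing the induction.

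The main obstacle I anticipate is handling the sequential, edge-by-edge nature of the pruning loop correctly: the bridge property must be argued as an invariant maintained throughout the inner \texttt{for} loop over $e \in H_\dem$, not just at its end, and one must be sure that the subgraph $\bigcup_{\dem' \supsetneq \dem} H_{\dem'}$ against which connectivity is tested is exactly the finalized version (which it is, since those demand sets are strictly larger and were processed earlier). A secondary point to get right is the reduction to a chain: I should invoke laminarity explicitly to say that $\{\dem' : \dem' \supseteq \dem\}$ is totally ordered by inclusion, so that $\bigcup_{\dem' \supseteq \dem} H_{\dem'}$ decomposes cleanly as claimed and the induction on decreasing size is well-founded.
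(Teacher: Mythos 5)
Your proposal is correct and is essentially the paper's argument: both hinge on the observation that any edge of $H_\dem$ lying on a cycle of $\bigcup_{\dem' \supseteq \dem} H_{\dem'}$ would have been deleted by the pruning stage, since its removal cannot disconnect a terminal set. The paper packages this as a "maximal counterexample" contradiction rather than your explicit induction on decreasing size, but the content (including the point that the strictly larger demand sets are finalized first) is the same.
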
 

\begin{proof}
Suppose, towards a contradiction, that the statement is false. Let
  $\dem$ be a maximal demand set such that $\bigcup_{\dem' \supseteq
    \dem} H_{\dem'}$ contains a cycle $C$.  By maximality, there
  exists $e \in C \cap H_\dem$. Since $e$ is in a cycle in
  $\bigcup_{\dem' \supseteq \dem} H_{\dem}$, we have that $(H_\dem - e) \cup \bigcup_{\dem' \supsetneq \dem} H_{\dem'}$ is still a Steiner
  forest for terminal sets with demand set $\dem$. Thus, the algorithm
  would have removed $e$ from $H_{\dem}$ and so we have a
  contradiction.
  
 \end{proof}


For a subset of edges $E' \subseteq E$, let $\deg_{E'}(S) = |\delta(S)
\cap E'|$ denote the number of edges in $E'$ exiting $S$.
\begin{lemma}
  \label{lem:inactive}
  Consider an iteration in phase $\dem$ of the dual raising stage. Let
  $S$ be a connected component of $F_\dem$ in this iteration. If $S
  \notin \cuts_\dem$, then $\sum_{\dem' \supseteq \dem}
  \deg_{H_{\dem'}}(S) \neq 1$.
\end{lemma}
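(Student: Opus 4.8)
The plan is to prove the contrapositive-style statement by analyzing what it would mean for an inactive component $S$ of $F_\dem$ to have exactly one edge of $\bigcup_{\dem' \supseteq \dem} H_{\dem'}$ crossing it, and deriving a contradiction with the pruning step. First I would observe that at the current iteration in phase $\dem$ we have $F_{\dem'} = \emptyset$ for all $\dem' \supsetneq \dem$ (since phases are processed in increasing order of size), so the only edges of $\bigcup_{\dem' \supseteq \dem} H_{\dem'}$ that are present \emph{at this moment} are those in $F_\dem$ together with whatever edges end up in $H_{\dem'}$ for $\dem' \supsetneq \dem$ after the full run. The subtlety is that $\deg_{H_{\dem'}}(S)$ refers to the final pruned sets $H_{\dem'}$, not to the state during phase $\dem$. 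So I would instead argue about the final solution: suppose $\sum_{\dem' \supseteq \dem} \deg_{H_{\dem'}}(S) = 1$, and let $e$ be the unique crossing edge, say $e \in H_{\dem^*}$ for some $\dem^* \supseteq \dem$.

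Next I would split on whether $\dem^* = \dem$ or $\dem^* \supsetneq \dem$. If $e \in H_\dem$: since $S \notin \cuts_\dem$, $S$ does not separate any terminal set with demand set containing $\dem$; as $S$ is a connected component of $F_\dem$ and $F_\dem \supseteq H_\dem$ (pruning only removes edges), every terminal with demand $\supseteq \dem$ lying in $S$ is connected to the rest of its terminal set only through edges inside $S$, hence there are none that need to leave $S$. Combined with the fact that $\bigcup_{\dem' \supseteq \dem} H_{\dem'}$ is a forest (Lemma~\ref{lem:forest}), removing $e$ from $H_\dem$ cannot disconnect any terminal set with demand set $\dem$: the only effect of deleting $e$ is to split off the part of $\bigcup_{\dem' \supseteq \dem} H_{\dem'}$ on the $S$-side, and that side contains no terminal that needs a partner outside. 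Therefore the pruning step for $\dem$ would have deleted $e$, contradicting $e \in H_\dem$. If instead $e \in H_{\dem^*}$ with $\dem^* \supsetneq \dem$, I would run essentially the same argument at level $\dem^*$: $e$ is a bridge in the forest $\bigcup_{\dem' \supseteq \dem^*} H_{\dem'}$, one side of which (contained in $S$, which at the time phase $\dem^*$ ran was already a connected component of $F_\dem \subseteq$ the then-current graph) contains no terminal with demand $\supseteq \dem^*$ that must be joined across $e$ — here I use that $S \notin \cuts_\dem$ implies $S \notin \cuts_{\dem^*}$ since $\dem^* \supseteq \dem$. So $e$ would have been pruned from $H_{\dem^*}$, again a contradiction.

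The main obstacle I anticipate is handling the timing discrepancy cleanly: the lemma is stated about an iteration \emph{during} phase $\dem$, but the quantity $\deg_{H_{\dem'}}(S)$ involves the \emph{final} pruned forests, and a connected component $S$ of $F_\dem$ in that iteration may later merge with other components as more $\dem$-tight edges are added in the same phase. I would need to argue that $S$, as a vertex set, is still meaningful at pruning time — specifically that $S$ is a union of connected components of the final $F_\dem$ restricted appropriately, or at least that no edge of $F_\dem$ with both endpoints destined to be connected crosses $\delta(S)$ — so that the "one bridge" picture is accurate. The key fact making this go through is Lemma~\ref{lem:active}: active sets are exactly connected components of $F_\dem$ separating relevant terminals, so an inactive component $S$ never has its dual raised and, once formed, only grows by absorbing whole components; the edges internal to the eventual supercomponent don't cross $\delta(S)$, so the bridge analysis and the pruning-deletion contradiction remain valid. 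I would make this precise and then conclude $\sum_{\dem' \supseteq \dem} \deg_{H_{\dem'}}(S) \neq 1$.
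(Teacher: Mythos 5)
Your proposal is correct and takes essentially the same route as the paper: identify the unique crossing edge $e\in H_A$, use the acyclicity of $\bigcup_{\dem'\supseteq A}H_{\dem'}$ (Lemma~\ref{lem:forest}) together with the pruning condition to conclude that $e$ survives only because it lies on the path between two terminals of some $X_j$ with $D_j=A\supseteq\dem$, and then note that this path crosses $\delta(S)$ exactly once, forcing $S\in\cuts_A\subseteq\cuts_\dem$ (the paper phrases this as a contrapositive rather than a contradiction). The timing discrepancy you anticipate is a non-issue: the argument only refers to the cut $\delta(S)$ and the final forests $H_{\dem'}$, and never actually uses that $S$ is a connected component of the mid-phase $F_\dem$, so no claim about $S$ persisting to pruning time is needed.
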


\begin{proof}
We prove the contrapositive. Suppose $\sum_{\dem' \supseteq \dem}
  \deg_{H_{\dem'}}(S) = 1$. Let $e$ and $A \supseteq \dem$ be the
  unique edge and demand set, respectively, such that $e \in H_{A}
  \cap \delta(S)$. Since the algorithm did not delete $e$ from $H_{A}$
  and $\bigcup_{\dem' \supseteq A}H_{\dem'}$ is acyclic by Lemma \ref{lem:forest},
  there exists $X_j$ with $D_j = A$ and $u,v \in X_j$ such that $e$ is
  on the unique $u-v$ path in $\bigcup_{\dem' \supseteq
    A}H_{\dem'}$. Since $\sum_{\dem' \supseteq \dem}
  \deg_{H_{\dem'}}(S) = 1$, the path crosses $S$ exactly once. Thus,
  we have that $S$ separates $u,v$ and so $S \in \cuts_A$. 
  By definition of $\cuts_\dem$, we have $\cuts_A \subseteq
  \cuts_\dem$ and this completes the proof of the lemma.

 \end{proof}

We are now ready to prove that the primal solution has cost at most
twice the dual value.
\begin{lemma}
  \label{lem:dual}
  $\sum_\dem \sum_{e \in H_\dem} |\dem|c_e \leq 2 \sum_{\dem,S} y_{\dem,S}$.
\end{lemma}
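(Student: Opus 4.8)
The plan is to follow the standard primal-dual charging argument of Goemans-Williamson, but applied phase by phase. The key identity to establish is that
\[
\sum_\dem \sum_{e \in H_\dem} |\dem| c_e = \sum_\dem \sum_{e \in H_\dem} \sum_{\dem' \subseteq \dem}\sum_{S \in \cuts_{\dem'} : e \in \delta(S)} y_{\dem',S},
\]
which holds because every edge $e \in H_\dem$ was added to $F_\dem$ exactly when it went $\dem$-tight, so the inner sum equals $|\dem| c_e$ by definition of $\dem$-tightness. (One should note that once $e$ is $\dem$-tight its left-hand side stays equal to $|\dem| c_e$ forever, since in later phases $\dem'$ with $\dem' \supseteq \dem$ we only raise duals $y_{\dem',S}$ with $\dem' \not\subseteq \dem$, which do not appear in the $\dem$-tightness constraint.) Reindexing the right-hand side by switching the order of summation, it equals $\sum_{\dem'}\sum_{S \in \cuts_{\dem'}} y_{\dem',S} \cdot |\{e : e \in \delta(S) \text{ and } e \in H_\dem \text{ for some } \dem \supseteq \dem'\}| = \sum_{\dem'}\sum_{S} y_{\dem',S} \cdot \deg_{\bigcup_{\dem \supseteq \dem'} H_\dem}(S)$.

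So it suffices to prove that $\sum_{\dem'}\sum_{S} y_{\dem',S}\, \deg_{\bigcup_{\dem \supseteq \dem'} H_\dem}(S) \le 2 \sum_{\dem'}\sum_S y_{\dem',S}$. Since the duals $y_{\dem',S}$ are raised continuously over the course of phase $\dem'$, it is enough to show that at every moment during phase $\dem'$, the active sets $S$ (whose duals are being raised) satisfy $\sum_{S \text{ active}} \deg_{\bigcup_{\dem \supseteq \dem'} H_\dem}(S) \le 2 \cdot (\#\text{active sets})$; integrating this over the dual-raising process of each phase and summing over phases gives the claim. This is the usual "average degree of active components is at most $2$" statement.

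To prove the average-degree bound, fix an iteration of phase $\dem'$ and let $E^* = \bigcup_{\dem \supseteq \dem'} H_\dem$ be the \emph{final} edge set (after pruning). By Lemma~\ref{lem:forest}, $E^*$ is a forest. Contract each connected component of the \emph{current} $F_{\dem'}$ to a single vertex; since active sets are exactly the components of $F_{\dem'}$ that lie in $\cuts_{\dem'}$ (Lemma~\ref{lem:active}), the active sets are a subset of these contracted vertices, call them "active nodes," and the remaining contracted vertices are "inactive nodes." The edges of $E^*$ descend to a forest $\hat T$ on these nodes (note $F_{\dem'} \subseteq E^*$, so $E^*$ respects the contraction). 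A standard counting argument in a forest shows $\sum_{v} \deg_{\hat T}(v) \le 2(\#\text{nodes} - 1) \le 2 (\#\text{nodes})$; we want to discard inactive nodes from this sum without help, so we use the key structural fact: by Lemma~\ref{lem:inactive}, no inactive node is a leaf of $\hat T$ (a component $S$ of $F_{\dem'}$ with $S \notin \cuts_{\dem'}$ has $\deg_{E^*}(S) \ne 1$; we must also rule out $\deg = 0$, but an isolated inactive node contributes $0$ and can simply be deleted). Iteratively removing inactive leaves (which have degree $0$) from $\hat T$ leaves a forest whose leaves are all active, and in a forest with $a$ active nodes and no inactive leaves, the total degree of active nodes is at most $2a$ — formally, repeatedly delete degree-$\le 1$ inactive nodes; each deletion removes at most one edge, so it never increases any active node's degree, and at the end $\sum_{\text{active }v}\deg(v) \le 2a$.

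The main obstacle I anticipate is the bookkeeping around edge indices versus which forest $H_\dem$ an edge "belongs to": the same physical edge $e$ of $G$ may appear in several $H_\dem$'s, so $\deg_{\bigcup_{\dem \supseteq \dem'} H_\dem}(S)$ counts $e$ once even if it sits in multiple forests, whereas the left side of the target inequality sums $|\dem| c_e$ once per demand set containing $e$. Getting the reindexing exactly right — and in particular verifying that pruning in later phases cannot disturb the tightness of an edge already placed in an earlier, smaller-demand forest — is the delicate point; this is exactly where the increasing-size ordering of dual-raising phases and the decreasing-size ordering of pruning are both used. Also, the "integrate over the dual process" step requires that the set of active components, and the forest $E^*$, interact correctly: $E^*$ is defined using the final pruned forests, but active sets are defined by the current $F_{\dem'}$; since $F_{\dem'} \subseteq H_{\dem'} \subseteq E^*$ wait — actually $H_{\dem'} \subseteq F_{\dem'}$ after pruning — one must check that a component of $F_{\dem'}$ is still connected (hence a union of $\le$ components) in $E^*$ at the relevant scale; this follows because $F_{\dem'} \subseteq \bigcup_{\dem \supseteq \dem'} F_\dem$ and the only edges of $F_{\dem'}$ removed during pruning of demand set $\dem'$ are replaced by connectivity through $\bigcup_{\dem \supsetneq \dem'} H_\dem \subseteq E^*$, so every $F_{\dem'}$-component stays inside a single $E^*$-component, making the contraction legitimate.
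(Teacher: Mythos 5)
Your proposal is correct and follows essentially the same route as the paper's proof: charge each edge of $H_\dem$ to the duals via $\dem$-tightness, reindex to reduce the claim to ``the active components of $F_{\dem'}$ have average degree at most $2$ with respect to $\bigcup_{\dem\supseteq\dem'} H_\dem$ in every iteration,'' and then prove this by contracting the components of $F_{\dem'}$, discarding isolated nodes, and combining Lemma~\ref{lem:forest} (the contracted graph is a forest) with Lemma~\ref{lem:inactive} (inactive nodes have degree at least $2$). The subtleties you flag at the end --- that each edge lies in $H_\dem$ for at most one $\dem\supseteq\dem'$, and that the contraction of $F_{\dem'}$-components interacts correctly with the pruned forests --- are exactly the points the paper also relies on (and disposes of with a one-line assertion), so your treatment matches the paper's level of rigor.
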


\begin{proof}
Using the fact that we only add tight edges, we have
\begin{align*}
  \sum_\dem \sum_{e \in H_\dem} |\dem|c_e 
  &= \sum_\dem \sum_{e \in H_\dem} \left(\sum_{\dem' \subseteq \dem}
    \sum_{S \in \cuts_{\dem'}:
      e \in \delta(S)} y_{\dem',S}\right) \\
 &  = \sum_{\dem'} \sum_{S \in \cuts_{\dem'}} y_{\dem',S} \left(\sum_{\dem \supseteq \dem'}
    \sum_{e \in
      \delta(S) \cap H_\dem} 1\right) \\
  & = \sum_{\dem'} \sum_{S \in \cuts_{\dem'}} y_{\dem',S} \left(\sum_{\dem \supseteq \dem'}
    \deg_{H_\dem}(S)\right) \\ 
 &  = \sum_{\dem'} \sum_{S \in \cuts_{\dem'}} y_{\dem',S} \deg_{\bigcup_{\dem \supseteq \dem'}
    H_\dem}(S).
\end{align*}
The second equality is obtained by rearranging, and the last follows from
the fact that each edge is in $H_{\dem}$ for at most one $\dem
\supseteq \dem'$.

Suppose that in an iteration in phase $\dem'$, the dual for each
active set is raised by $\Delta$. This implies $\sum_{S \in
  \cuts_{\dem'}} y_{\dem',S} \deg_{\bigcup_{\dem \supseteq \dem'}
  H_\dem}(S)$ increases by $\Delta \cdot \sum_{S \text{ active}}
\deg_{\bigcup_{\dem \supseteq \dem'} H_\dem}(S)$, and $\sum_{\dem,S}
y_{\dem,S}$ increases by $\Delta \cdot \text{\# active sets}$. So
it suffices to prove that in each phase $\dem'$ and in each iteration
within the phase, the average degree of active sets is at most $2$:
\[\sum_{S \text{ active}} \deg_{\bigcup_{\dem \supseteq \dem'} H_\dem}(S) \leq 2 \cdot
\text{\# active sets}.\]

Fix an iteration in phase $\dem'$. 
Note that each active set corresponds to some connected component of
$F_{\dem'}$ by Lemma \ref{lem:active}.
%
Let $G'$ be a graph whose nodes are connected components of
$F_{\dem'}$ and whose edge set is $\bigcup_{\dem \supseteq \dem'}
H_\dem$. The degree of a node in $G'$ is equal to the degree of the
corresponding set with respect to edge set $\bigcup_{\dem \supseteq
  \dem'} H_\dem$. Let us say that a node of $G'$ corresponding to an
active set is an \emph{active node}, and that any other node is
\emph{inactive}. We want to show that the average degree of active
nodes in $G'$ is at most $2$. Suppose we remove all isolated nodes
from $G'$. In the resulting graph, by Lemma \ref{lem:inactive} the
degree of each inactive node is at least $2$, and by Lemma
\ref{lem:forest} the average degree is at most $2$. So the claim follows.
\end{proof}

Lemmas \ref{lem:feasible} and \ref{lem:dual} gives us the following
theorem.
\begin{theorem}
  Algorithm \ref{alg:laminar} is a $2$-approximation for network
  design with coverage costs in the laminar demands setting.
\end{theorem}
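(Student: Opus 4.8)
The plan is to obtain the $2$-approximation by the usual primal-dual sandwich: the dual solution built during the dual ascent stage lower-bounds $\OPT$ via weak LP duality, the pruned primal solution is within a factor $2$ of the dual value by Lemma~\ref{lem:dual}, and feasibility of both solutions is already guaranteed by Lemma~\ref{lem:feasible}. So the theorem is essentially the assembly step on top of Lemmas~\ref{lem:active}--\ref{lem:dual}.

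First I would argue that the displayed primal LP is a relaxation of the problem, so that its optimum is at most $\OPT$. Recall from Section~\ref{sec:prelim} that in the laminar setting any feasible routing $\trees$ can be re-encoded as a collection of forests $\{H_\dem(\trees)\}_{\dem\in\dems}$ in which every terminal set $X_j$ is connected in $\bigcup_{\dem\supseteq\dem_j} H_\dem(\trees)$, and that its cost equals $\sum_\dem |\dem|\, c(H_\dem(\trees))$. Setting $x_{e,\dem}=1$ exactly when $e\in H_\dem(\trees)$ then yields a primal-feasible point of the same objective value, so the optimal LP value is at most $\OPT$. By Lemma~\ref{lem:feasible} the dual variables $\{y_{\dem,S}\}$ produced by Algorithm~\ref{alg:laminar} form a feasible dual solution, hence weak duality gives $\sum_{\dem,S} y_{\dem,S}$ at most the optimal LP value, which is at most $\OPT$.

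Next I would turn to the primal side. By Lemma~\ref{lem:feasible} the output $\{H_\dem\}_{\dem\in\dems}$ is a feasible primal solution, so (again via the Section~\ref{sec:prelim} equivalence) it encodes a genuine routing whose cost is $\sum_\dem\sum_{e\in H_\dem}|\dem| c_e$. By Lemma~\ref{lem:dual} this is at most $2\sum_{\dem,S} y_{\dem,S}$, and combining with the previous paragraph we get $\sum_\dem\sum_{e\in H_\dem}|\dem| c_e \le 2\,\OPT$. This establishes that the algorithm returns a feasible routing of cost at most twice optimal.

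Finally I would dispatch the remaining bookkeeping, namely that Algorithm~\ref{alg:laminar} runs in polynomial time. By Lemma~\ref{lem:active} the active sets in any iteration are precisely the connected components of $F_\dem$ that separate a terminal set with demand set containing $\dem$, so each inner-loop iteration can be implemented by computing the largest uniform dual increment before some edge becomes $\dem$-tight, exactly as in Goemans--Williamson; each phase adds at most $|V|-1$ edges to $F_\dem$ before all $\dem$-unsatisfied cuts disappear, there are $|\dems|\le g$ phases, and the pruning stage is one reverse-delete pass, for a polynomial total. I do not anticipate a genuine obstacle here, since all the substance lives in the earlier lemmas; the one point that I would be careful to state explicitly is the first one --- that we are comparing the algorithm's cost against $\OPT$ through the LP relaxation and not merely against the LP value --- but this is immediate from the reformulation of routings as forests in Section~\ref{sec:prelim}.
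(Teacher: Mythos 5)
Your proposal is correct and matches the paper's argument, which simply assembles Lemmas~\ref{lem:feasible} and \ref{lem:dual} via the standard weak-duality sandwich (the paper states this in one line; you spell out the same steps, including the observation that the LP is a relaxation by the forest reformulation of Section~\ref{sec:prelim}). No gaps.
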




\section{A logarithmic approximation for the sunflower demands setting}
\label{sec:unif}
We now consider the sunflower demands setting. The main technical result of this section is the following lemma which
says that we can find a group spanner 
of linear size with stretch
$O(\log g)$.
\begin{lemma}
  \label{thm:uniform-unweighted}  
  Given an unweighted graph $G = (V, E)$ ($c_e = 1$ for all $e \in G$)
  and terminal sets $X_1, \ldots, X_g$ such that $V= \bigcup_j X_j$,
  we can construct in polynomial time a $(14, 4\log g)$ group spanner.
\end{lemma}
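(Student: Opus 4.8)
\emph{Proof proposal.} I would begin with two reductions. All three quantities in the statement are additive over the connected components of $G$, so assume $G$ is connected; then, since $V=\bigcup_j X_j$, the optimal Steiner forest $F^*$ must touch every vertex, and (being dominated by any spanning tree) we get $c(F^*)=\Theta(|V|)$. It therefore suffices to build a subgraph $H$ with $c(H)=O(|V|)$ such that $\Steiner_H(X_j)=O(\log g)\,\Steiner_G(X_j)$ for all $j$; the explicit constants $14$ and $4$ are recovered by bookkeeping at the end. One may also discard from each $X_j$ any vertex that does not affect $\Steiner_G(X_j)$, and assume $\Steiner_G(X_j)\ge 1$.

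The construction I would use is a \emph{greedy group spanner}, the natural analogue of the greedy $(2k-1)$-spanner with $k=\Theta(\log g)$. Process the groups in non-decreasing order of $\Steiner_G(X_j)$; for the current group $X_j$, if $H$ already contains a Steiner tree for $X_j$ of cost at most $(2k-1)\,\Steiner_G(X_j)$, skip it, and otherwise add to $H$ a $2$-approximate minimum Steiner tree of $X_j$. The stretch guarantee is then immediate from the construction. Everything is in bounding $c(H)=O(|V|)$, i.e.\ in showing that the Steiner trees we actually pay for are, in aggregate, sparse.

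For that I would prove a Moore-type (``high girth'') bound tailored to the group setting, in analogy with the fact that a $|V|$-vertex graph of girth more than $2k$ has at most $|V|^{1+1/k}$ edges. The structural point is that whenever we pay for $X_j$, its newly added tree $T_j$ is ``$(2k-1)$-independent'' of the previously added trees: no short cycle can close up between $T_j$ and $\bigcup_{i<j}T_i$, since otherwise the earlier structure together with a short detour would already give $X_j$ a Steiner tree of cost at most $(2k-1)\,\Steiner_G(X_j)$ and we would have skipped it. Formalising ``short cycle'' when the objects are Steiner trees rather than single edges calls for a moat-growing argument: paying for $X_j$ forces a ball of radius $\approx \Steiner_G(X_j)$ around part of $X_j$ to receive fresh edges, and these moats can be charged disjointly. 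Summing the charges and using $k=\Theta(\log g)$, so that $|V|^{1/k}=O(1)$, would give $c(H)=O(|V|)=O(c(F^*))$; the precise constants $14$ and $4$ fall out of the $2$-approximation factor for the trees, the moat radii, and the Moore exponent.

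I expect the moat/charging argument to be the main obstacle --- in particular, arguing that the interaction among the $g$ groups costs only an $O(\log g)$ factor in stretch (and nothing in the linear edge bound) rather than, say, an $O(\log|V|)$ factor; this is precisely where having only $g$ terminal sets, together with the slack $c(F^*)=\Theta(|V|)$ afforded by $V=\bigcup_j X_j$, must be used. If the greedy analysis turns out to be unwieldy, an alternative is a Thorup--Zwick / Baswana--Sen style hierarchical clustering of $G$ with $\Theta(\log g)$ levels, in which the cluster centers at each level are chosen (at random, then derandomized by the method of conditional expectations) so as to certify simultaneous ``closure'' of all $g$ groups; there the linear edge bound is immediate because the number of clusters only decreases from $|V|$, and the stretch is handled edge-by-edge along each group's optimal Steiner tree as in standard spanner arguments.
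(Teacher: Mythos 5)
Your high-level reductions (restrict to connected $G$; use $V=\bigcup_j X_j$ to get $c(\SteinerF^*)=\Theta(|V|)$, so that an $O(|V|)$-size subgraph suffices for the first coordinate) match the paper. But the core of your argument --- the greedy tree-adding rule together with a ``Moore-type'' or moat-based charge showing the added trees total $O(|V|)$ edges --- is exactly the part you leave unproven, and I do not believe it goes through as sketched. Quantitatively: when you pay for $X_j$, the moat/dual lower bound in the current $H$ gives you disjoint moats of total radius $\Omega(k\cdot\Steiner_G(X_j))$, hence (in a connected unweighted spanning subgraph) $\Omega(\log g\cdot\Steiner_G(X_j))$ vertices against $O(\Steiner_G(X_j))$ newly added edges --- a charge of $O(1/\log g)$ per vertex. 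But the moats of \emph{different} groups overlap arbitrarily, so a vertex can absorb this charge up to $g$ times, and the bound degrades to $O(g|V|/\log g)$, not $O(|V|)$. The Moore-bound route fails for the opposite regime: girth $>2k$ gives $|V|^{1+1/k}$ edges, which is $O(|V|)$ only when $k=\Omega(\log |V|)$, whereas you need $k=\Theta(\log g)$ with $g$ possibly much smaller than $|V|$; there the issue becomes bounding the total length of the (up to $g$) added Steiner trees, and nothing in your sketch prevents that from being $\omega(|V|)$. So the ``main obstacle'' you flag is not a bookkeeping step; it is where a new idea is required, and the factor-$g$ overcounting is precisely what it must defeat.

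The paper defeats it with two ingredients absent from your proposal. First, a reduction to \emph{uniform} instances: each $X_j$ is augmented with the Steiner vertices of a $2$-approximate Steiner tree, after which the terminals of $X_j$ can be ordered so that each has a unit-length \emph{satisfying edge} to an earlier terminal of its group. This converts the Steiner-tree stretch requirement into per-terminal requirements (``be within distance $2\log g$ of your prefix''), each repairable by adding a \emph{single} edge rather than a whole tree, with Fact~\ref{fact:sat} chaining the per-terminal bounds back into a $2\log g$ stretch for the tree. Second, instead of greedily adding trees, the algorithm adds satisfying edges while maintaining an auxiliary orientation $E'$ with out-degree at most $2$ (via path-reversals) and girth at least $\log g$; any terminal still unsatisfied after this phase has a full binary tree of depth $\log g$ in $E'$ hanging off it, i.e., $|\Gamma(x_{j,i})|\ge 2^{\log g}=g$ reachable vertices. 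Since the $\Gamma$-sets are disjoint within each group, $\sum_{x\in U}|\Gamma(x)|\le g|V|$ forces $|U|\le |V|$ --- the exponential expansion supplies exactly the factor $g$ needed to cancel the cross-group overcounting that sinks the naive charging. Your alternative (Baswana--Sen-style clustering) has the same unresolved difficulty: tuned to stretch $O(\log g)$ it controls pairwise distances with $|V|^{1+1/\Theta(\log g)}$ edges, which again is not $O(|V|)$ unless $g$ is polynomial in $|V|$.
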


Before we prove Lemma~\ref{thm:uniform-unweighted}, we observe that,
together with Lemma~\ref{lem:sparse}, it implies the following result
for unweighted instances of the sunflower demands setting with vertex
set $V = \bigcup_j X_j$.

\begin{theorem}\label{thm:uniform-main}
  Network design with coverage costs in the sunflower demands setting
  admits an $O(\log g)$ approximation over unweighted graphs with
  vertex set $V = \bigcup_j X_j$.
\end{theorem}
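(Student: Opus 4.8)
\noindent\emph{Proof plan.} The theorem follows immediately by combining Lemma~\ref{thm:uniform-unweighted} with Lemma~\ref{lem:sparse}. Given an unweighted instance with $V = \bigcup_j X_j$, Lemma~\ref{thm:uniform-unweighted} produces in polynomial time a $(14, 4\log g)$ group spanner $H$ for $G$ and $X_1, \ldots, X_g$. Feeding $H$ into Lemma~\ref{lem:sparse} with $\alpha = 14$ and $\beta = 4\log g$ yields a routing solution of cost at most $(\alpha + 2\beta)\,\OPT = (14 + 8\log g)\,\OPT = O(\log g)\,\OPT$ for any sunflower demands instance over $G$ and these terminal sets. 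Since the spanner is built without reference to the demand sets, the resulting approximation is moreover oblivious to $\{D_j\}$.

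Hence everything rests on Lemma~\ref{thm:uniform-unweighted}, whose proof I expect to be the real obstacle; here is how I would attack it. First I would reduce to a convenient form: singleton terminal sets impose no connectivity requirement and can be discarded, and the connected components of $G$ can be treated separately and the spanners unioned, so assume $G$ is connected and every $X_j$ has at least two vertices. Because $V = \bigcup_j X_j$, the optimal Steiner forest $F^*$ then touches every vertex, so it is a forest on all $n$ vertices each of whose components has at least two vertices; counting, $c(F^*) \geq n/2$, while the minimum spanning tree $T$ has $c(T) = n - 1 \leq 2\,c(F^*)$. It therefore suffices to build a subgraph $H$ with $c(H) = O(c(T)) = O(n)$ such that every $X_j$ admits a Steiner tree in $H$ of cost $O(\log g)\,\Steiner_G(X_j)$; tracking the hidden constants then gives the $14$ and the $4\log g$.

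For the stretch, I would use the simplification that it is enough for every edge $(u,v)$ appearing in some optimal Steiner tree $T_j^*$ to have a path of length $O(\log g)$ between $u$ and $v$ in $H$: replacing each edge of $T_j^*$ by such a path and taking a spanning tree of the union gives a Steiner tree for $X_j$ in $H$ of cost $O(\log g)\,c(T_j^*) = O(\log g)\,\Steiner_G(X_j)$. A plain greedy/girth spanner of $G$ already achieves this with $O(n)$ edges, but with $\log n$ in place of $\log g$, so the crux is to make the construction \emph{terminal-aware} and trade $n$ for $g$. Concretely, I would start from $T$ and build $H$ by a recursion of depth $O(\log g)$ that repeatedly clusters the current structure into fewer and fewer pieces (along the lines of low-diameter tree decompositions), at each level adding a bounded number of short connector edges of $G$ between pieces so that any two endpoints that some $T_j^*$ must join become $O(1)$-close at that level, with the connector counts telescoping to $O(n)$ in total. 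The hard part will be establishing the two guarantees \emph{together}: that $O(\log g)$ levels, rather than $\Omega(\log n)$, suffice for the stretch --- this is where the bound $g$, not $n$, must enter, presumably because only the $g$ terminal sets ever need paths resolved --- and that the connectors summed over all levels keep $c(H)$ within a constant factor of $c(T)$, while the paths that become short are exactly those between the endpoints actually used by the $T_j^*$.
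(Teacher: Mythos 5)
Your proof of the theorem is exactly the paper's: Theorem~\ref{thm:uniform-main} follows immediately by combining Lemma~\ref{thm:uniform-unweighted} with Lemma~\ref{lem:sparse}, giving a $14 + 8\log g = O(\log g)$ approximation, and your first paragraph does precisely this. Your speculative sketch of how Lemma~\ref{thm:uniform-unweighted} itself might be proved (recursive low-diameter clustering) is not how the paper proceeds --- the paper reduces to ``uniform'' instances via the MST heuristic and then adds ``satisfying edges'' under a bounded out-degree arc orientation together with a girth argument --- but since you invoke that lemma as a black box, this does not affect the correctness of your proof of the stated theorem.
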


In the remainder of the section we will focus on unweighted graphs and
write $|H|$ to denote the cost (i.e., the number of edges) of subgraph
$H$. Let us recall some notation: for a subgraph $H$, $\Steiner_H(X)$
denotes the cost of an optimal (i.e., minimum cost) Steiner tree over
vertex set $X$ in $H$, and $d_H(u,v)$ denotes the distance between
vertices $u,v$ in $H$.  Let $\MST$ denote a minimum spanning tree of
the given graph $G$.


Now we prove Lemma~\ref{thm:uniform-unweighted}.  To that end we
consider \emph{uniform} group spanner instances where the following
holds for all $j$: for all strict subsets $S$ of
$X_j$, 
there exists an edge $(x,y) \in E$ such that $x \in S, y \in X_j
\setminus S$. In other words, there exists an optimal Steiner tree for
each $X_j$ with no Steiner vertices and it is easy to find. 

Next we show that in order to establish
Lemma~\ref{thm:uniform-unweighted} it suffices to solve uniform
instances.  We can transform any given group spanner instance
over an unweighted graph $G$ with $V = \bigcup_j X_j$ into a uniform
instance as follows: add to $X_j$ all Steiner vertices in the
$2$-approximate Steiner tree given by the MST
heuristic~\cite{vazirani2001approximation} applied over $X_j$ in
$G$ 
and let $X'_j$ be the resulting set. Since $X'_j$ is the set of all vertices
of a Steiner tree, the group spanner instance with terminal sets
$X'_1, \ldots, X'_g$ is a uniform one.

Say we obtain subgraph $H$ after solving the above uniform instance
and $H$ satisfies $\Steiner_H(X'_j) \leq \beta \Steiner_G(X'_j)$ for
all $j$ and $|H| \leq \alpha |\MST|$.  We show that $H$ is in fact a
$(2\alpha, 2 \beta)$ group spanner for the original instance.  The
MST heuristic guarantees that $\Steiner_G(X_j') \leq 2
\Steiner_G(X_j)$; which implies $\Steiner_H(X_j) \leq 2 \beta
\Steiner_G(X_j)$.  Finally, let $\SteinerF^*$ denote an optimal Steiner
forest for $X_1, \ldots, X_g$ in $G$. In an unweighted instance, we have that
$|\SteinerF^*| \geq |\MST|/2$. This is because $V= \bigcup_j X_j$ and
each component of the forest has at least one edge\footnote{We assume
  without loss of generality that $|X_j| \geq 2$ for all $j$} so
$|\SteinerF^*| \geq |V|/2 \geq |\MST|/2$. Since, $|H| \leq \alpha
|\MST|$ we get the cost guarantee, $|H| \leq 2 \alpha |\SteinerF^*|$.


This implies that to prove Lemma~\ref{thm:uniform-unweighted} we only need to solve uniform group spanner instances.  In the remainder of this section, we focus on uniform instances and for ease of exposition write $X_j$ in place of $X_j'$.

\begin{lemma}
  \label{thm:uniform}
  Given any uniform group spanner instance with terminal sets $X_j$, there exists a subset of edges $A$ of size
  $|A| \leq 6 |\MST|$ such that for $H:=A \cup \MST $ we have $\Steiner_{H}(X_j) \leq (2\log
  g)  \Steiner_{G}(X_j)$ for all $j$.
\end{lemma}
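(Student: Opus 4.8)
\medskip
\noindent\emph{Proof plan.}\
Because the instance is uniform, $G[X_j]$ is connected, so $\Steiner_G(X_j)=|X_j|-1$ and we may fix a spanning tree $Y_j$ of $G[X_j]$. Hence it suffices to construct $A$ with $|A|\le 6|\MST|$ such that, for every $j$, the subgraph $\MST\cup A$ contains a subtree spanning $X_j$ with at most $2(|X_j|-1)\log g$ edges; since such a subtree can be obtained by replacing each edge of $Y_j$ by a connecting path and taking a spanning tree of the union, it is enough to arrange that every edge of every $Y_j$ is ``stretched'' to a path of at most $2\log g$ edges in $\MST\cup A$.

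The plan is to build $A$ by a hierarchical clustering run for $L=\lceil\log_2 g\rceil$ rounds, in the spirit of the Baswana--Sen spanner construction. I would maintain partitions $\mathcal{C}_0\succeq\mathcal{C}_1\succeq\cdots\succeq\mathcal{C}_L$ of $V$ with $\mathcal{C}_0$ the singleton partition, where $\mathcal{C}_{i+1}$ is obtained by grouping the clusters of $\mathcal{C}_i$ into \emph{bunches} and merging each bunch into a single cluster, and where every cluster $C$ carries a spanning tree $\tau_C\subseteq\MST\cup A$. For the edge bound I would require each bunch to be connected in the contracted graph $G/\mathcal{C}_i$ and to add exactly one edge of $G$ per cluster it absorbs, so that round $i$ contributes $|\mathcal{C}_i|-|\mathcal{C}_{i+1}|$ edges; this telescopes to at most $|\mathcal{C}_0|=|V|\le|\MST|+1$ over all rounds, comfortably inside the budget. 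To keep the trees $\tau_C$ small, each bunch should also have radius $O(1)$ in $G/\mathcal{C}_i$ (for instance a matched pair of clusters with the adjacent unmatched clusters attached as pendants), which by induction makes $\tau_C$ have radius $O(i)$ in $\MST\cup A$.

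For the stretch I would fix a group $X_j$ and an edge $(u,v)\in Y_j$ and follow the clusters containing $u$ and $v$ through the rounds; if they lie in a common cluster $C$ of $\mathcal{C}_L$, the path from $u$ to $\mathrm{center}(C)$ and on to $v$ within $\tau_C$ uses $O(L)=O(\log g)$ edges, and summing over the $|X_j|-1$ edges of $Y_j$ gives a connected subgraph spanning $X_j$ of size $O(\log g)(|X_j|-1)$; a careful accounting of the constants in the construction then yields the stated $2(|X_j|-1)\log g$ bound. The crux --- and the step I expect to be the main obstacle --- is guaranteeing that within only $L=\lceil\log_2 g\rceil$ rounds the two endpoints of every $Y_j$-edge have actually been merged into one cluster, simultaneously for all $g$ groups: an obliviously chosen bunching (say a maximal matching of $G/\mathcal{C}_i$) can make no progress at all, so the bunches must be picked with the groups in view, most naturally by randomizing the bunching so that any fixed pair of currently-adjacent clusters is merged with probability bounded away from zero in every round, and then union-bounding the failure probability over the $g$ groups (and the $O(\log g)$ rounds) --- this is the only point at which the parameter $g$, rather than $n$, enters. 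Such a randomized construction is finally derandomized by the method of conditional expectations. Modulo this merging lemma, the edge bound is a routine telescoping sum and the stretch bound a routine charge of $O(|X_j|)$ edges against the $O(\log g)$ levels.
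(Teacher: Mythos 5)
The crux you flag yourself --- the merging lemma --- is a genuine gap, and I don't believe it can be repaired within your framework. Your plan requires that for \emph{every} edge $(u,v)$ of every fixed spanning tree $Y_j$ of $G[X_j]$, the clusters of $u$ and $v$ coalesce within $L=\lceil\log_2 g\rceil$ rounds. Even granting a constant per-round merging probability $p$ for any currently-adjacent cluster pair, the probability that a fixed pair survives all $L$ rounds unmerged is $(1-p)^L = g^{-\Theta(1)}$, and the union bound must run over all $\sum_j(|X_j|-1)$ tree edges --- up to $g\cdot n$ events --- not over the $g$ groups. Making the failure probability beat that many events forces $L=\Omega(\log(gn))=\Omega(\log n)$ rounds, which destroys the $O(\log g)$ stretch precisely in the regime $g\ll n$ that the lemma is about. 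The deeper issue is that you have committed to a \emph{per-edge} stretch guarantee ($d_H(u,v)\le 2\log g$ for each $Y_j$-edge), which is essentially an ordinary spanner guarantee and is strictly stronger than what the tree-size bound needs.

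The paper's proof avoids this by relaxing the requirement: it orders each $X_j$ as $x_{j,1},x_{j,2},\ldots$ (using uniformity, each $x_{j,i}$ has a ``satisfying edge'' to some earlier terminal) and only demands that each $x_{j,i}$ be within distance $2\log g$ of the \emph{set} $\{x_{j,1},\ldots,x_{j,i-1}\}$ of all its predecessors; concatenating such paths still yields a tree of size $(2\log g)(|X_j|-1)$. It then adds satisfying edges greedily while maintaining an orientation of the added edges with out-degree at most $2$ (giving $|A_1|\le 2|V|$), and --- the step your plan has no analogue of --- it does \emph{not} try to satisfy every terminal this way. Terminals that remain unsatisfied simply get their satisfying edge added outright in a second phase, and a girth/counting argument bounds their number: the added arc set has girth at least $\log g$, so each unsatisfied terminal roots a full binary tree of depth $\log g$ (hence $\ge g$ vertices) in its reachable set $\Gamma$, these sets are disjoint within a group, and therefore at most $|V|$ terminals can be unsatisfied. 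This ``pay for the few failures directly'' mechanism is exactly what lets $g$, rather than $n$, govern the stretch; your all-edges-must-merge formulation has no budget for failures and hence no way to exploit small $g$. I'd suggest reworking your argument around a distance-to-prefix-set condition and a counting bound on the exceptions rather than trying to force simultaneous contraction of all tree edges.
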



Since $|H| = |A| + |T|  \leq 7 |\MST|$ and $\Steiner_{H}(X_j) \leq
(2\log g) \Steiner_{G}(X_j)$, we get that $H$ is a $(14, 4 \log g)$ group spanner that satisfies the desired bounds in Lemma~\ref{thm:uniform-unweighted}.

We now move on to present a constructive proof of
Lemma~\ref{thm:uniform}. We assume that terminals of $X_j$ are ordered
$x_{j,1}, x_{j,2}, \ldots$ such that for $i > 1$, there exists an edge
$(x_{j,i}, x_{j,k}) \in E$ for some $k < i$; we call this edge a
\emph{satisfying edge} for $x_{j,i}$. For ease of notation, we drop
the indices when they do not matter and write $(x,y)$ to denote $x$'s
satisfying edge. Note that such an ordering always exists, e.g. a
preordering of the (uniform) Steiner tree over $X_j$ with any root. We say that a terminal
$x_{j,i} \in X_j$ is \emph{unsatisfied}\footnote{We define the lowest
  indexed vertex $x_{j,1}$ to be always satisfied.} in a spanning
subgraph $H$ if $d_H(x_{j,i}, \{x_{j,1} \ldots, x_{j,i-1}\}) > 2\log
g$. 
Note that a single vertex may correspond to multiple
satisfied/unsatisfied terminals of different groups. The following
fact implies that subgraphs in which all terminals are satisfied are
group spanners with $\beta = 2\log g$.

\begin{fact}\label{fact:sat}
  If $H$ is a spanning subgraph such that $d_H(x_{j,i}, \{x_{j,1}
  \ldots, x_{j,i-1}\}) \leq 2\log g$ for all $i > 1$, then there
  exists a Steiner tree for $X_j$ in $H$ with total size at most $(2\log g) \Steiner_G(X_j)$.
\end{fact}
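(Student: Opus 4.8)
The plan is to construct the desired Steiner tree by stitching together, for each terminal of $X_j$, a short path joining it to an earlier terminal in the ordering $x_{j,1}, x_{j,2}, \ldots$. Concretely, for each $i > 1$ I would fix a shortest path $P_i$ in $H$ from $x_{j,i}$ to the set $\{x_{j,1}, \ldots, x_{j,i-1}\}$; by the hypothesis, $P_i$ has at most $2\log g$ edges, and its other endpoint is some $x_{j,k_i}$ with $k_i < i$. Let $S := \bigcup_{i > 1} P_i$ be the subgraph of $H$ formed by the union of all these paths.

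First I would verify that $S$ is connected and contains every vertex of $X_j$. It contains $X_j$ because $x_{j,i} \in P_i$ for every $i > 1$, and $x_{j,1}$ appears as the far endpoint of $P_2$ (or, more generally, is reached through the back-pointers $k_i$). Connectivity follows by induction on the terminal index: $x_{j,1} \in S$, and for $i > 1$ the path $P_i$ connects $x_{j,i}$ to $x_{j,k_i}$ with $k_i < i$, which by the inductive hypothesis already lies in the connected component of $S$ containing $x_{j,1}$; hence all of $S$ is connected.

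Next I would bound $|S|$. Since $S$ is a union of $|X_j| - 1$ paths, each with at most $2\log g$ edges, we get $|S| \le (|X_j| - 1)\cdot 2\log g$. Separately, any Steiner tree for $X_j$ spans at least the $|X_j|$ vertices of $X_j$ and so has at least $|X_j| - 1$ edges, giving $\Steiner_G(X_j) \ge |X_j| - 1$. Combining the two inequalities yields $|S| \le (2\log g)\,\Steiner_G(X_j)$.

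Finally, I would take any spanning tree $T_j$ of the connected subgraph $S$. Then $T_j$ is a tree in $H$ containing every vertex of $X_j$, i.e.\ a Steiner tree for $X_j$ in $H$, and it has at most $|S| \le (2\log g)\,\Steiner_G(X_j)$ edges, which is the claimed bound. I do not expect a genuine obstacle here: the only points requiring care are the inductive connectivity argument and the elementary inequality $\Steiner_G(X_j) \ge |X_j| - 1$ that lets us convert a count of terminals into a lower bound on the optimal Steiner cost.
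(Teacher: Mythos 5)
Your proof is correct and is exactly the argument the paper leaves implicit (the Fact is stated without proof): union the $\le 2\log g$-hop paths from each $x_{j,i}$ back to an earlier terminal, get connectivity by induction on the index, and convert the count of $|X_j|-1$ paths into the bound via $\Steiner_G(X_j)\ge |X_j|-1$, which holds since any Steiner tree contains all of $X_j$. No gaps.
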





Our algorithm starts with the MST $T$ and adds satisfying edges to it
in order to construct $H$. In order to bound the cost of these edges,
the algorithm maintains an arc set $E'$ defined over the vertex set
$V$. Let $G'$ denote the directed graph $(V,E')$. At the beginning of
the algorithm, $E'$ is empty. We use \emph{arcs} to refer to directed
edges in $E'$ and simply \emph{edges} for edges in $E$. Our algorithm
works in two phases. In the first phase, for each unsatisfied
terminal, the algorithm adds its satisfying edge only if we can add an
oriented copy of it to $E'$ and modify nearby arcs in $E'$ such that
the out-degree of every node is at most $2$. The main lemma is that
the number of unsatisfied terminals at the end of this phase is at
most $|V|$, and so we can simply add their satisfying edges in the
second phase. We use the following notation for the algorithm:
$\out(x)$ denotes the number of edges of $E'$ that are oriented away
from $x$; $\Gamma(x) \subseteq V$ denotes the set of terminals
reachable from $x$ via a directed path in $E'$ of length at most $\log
g$.


\begin{algorithm}
\caption{Algorithm for uniform graph spanner instances}
  \begin{algorithmic}[1]
   \label{alg:uniform}
   \STATE \emph{(Phase 1)}
   \STATE $E',A_1,A_2 \leftarrow \emptyset$
   \WHILE {there exists $x$ that is unsatisfied in $\MST
     \cup A_1$ and $z \in \Gamma(x)$ such that $\out(z) \leq 1$}
   \STATE Add $x$'s satisfying edge $(x,y)$ to $E'$ oriented from $x$ to $y$
   \STATE Add $(x,y)$ to $A_1$
   \IF {$\out(x) > 2$}
   \STATE Flip directions of arcs in $G'$ along $x-z$ path
   \ENDIF
   \ENDWHILE
   \STATE \emph{(Phase 2)}
   \STATE For every $x$ unsatisfied in $\MST \cup A_1$, add its
   satisfying edge $(x,y)$ to $A_2$
   \RETURN $A = A_1 \cup A_2$
  \end{algorithmic}
\end{algorithm}

At the end of the algorithm every vertex is
satisfied. Fact~\ref{fact:sat} then implies that $H=T\cup A_1\cup A_2$
is a group spanner with $\beta = 2 \log g$. So we only need to bound
the sizes of $A_1$ and $A_2$. Since there is a one-to-one
correspondence between edges in $A_1$ and arcs in $E'$, the following
lemma implies that $|A_1| = |E'| \leq 2|V|$. 
\begin{lemma}
  \label{lem:outdeg}
  We have $\out(x) \leq 2$ for all $x \in V$.
\end{lemma}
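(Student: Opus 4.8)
The plan is to establish, by induction over the iterations of Phase~1 of Algorithm~\ref{alg:uniform}, the invariant ``$\out(w) \le 2$ for every $w \in V$'', maintained at every point of the execution. Phase~2 never modifies $E'$ (it only appends edges to $A_2$), so nothing needs to be checked there; and the invariant holds trivially at the start, when $E' = \emptyset$ and all out-degrees are $0$. So I would fix an arbitrary iteration of the while loop, with its unsatisfied terminal $x$ and the witness $z \in \Gamma(x)$ satisfying $\out(z) \le 1$; all degrees below are as of the start of this iteration, and I assume the invariant holds then.

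First I would record two easy observations about the single arc $x \to y$ inserted in this iteration. It raises $\out(x)$ by at most $1$ and changes no other out-degree; and in fact it is genuinely new, since if $x$'s satisfying edge $(x,y)$ (for the group $j$, index $i$, for which $x=x_{j,i}$ is unsatisfied) were already in $A_1$, then $x$ would lie within distance $1$ of $\{x_{j,1},\dots,x_{j,i-1}\}$ in $\MST\cup A_1$ and hence be satisfied, a contradiction. So after the insertion $\out(x)\le 3$, $\out(w)\le 2$ for $w\ne x$, and $x\to y\notin E'$ beforehand. If $\out(x)\le 2$ the algorithm does not flip and the invariant is preserved. Otherwise $\out(x)=3$, so $x$ already had two outgoing arcs; in particular $x\ne z$ (because $\out(z)\le 1$), and since $z\in\Gamma(x)$ there is a simple directed path $P:\ x=v_0\to v_1\to\cdots\to v_k=z$ in $E'$ with $k\ge 1$, which survives the insertion of $x\to y$.

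The heart of the argument is accounting for the flip of the arcs of $P$. For each internal vertex $v_i$ with $0<i<k$, the flip trades the outgoing arc $v_i\to v_{i+1}$ for the outgoing arc $v_i\to v_{i-1}$, so $\out(v_i)$ is unchanged. At $v_0=x$, the arc $v_0\to v_1$ — which is distinct from $x\to y$, since $v_0\to v_1$ was in $E'$ before the insertion while $x\to y$ was not — becomes $v_1\to v_0$, so $\out(x)$ drops from $3$ back to $2$. At $v_k=z$, the flip creates the new outgoing arc $z\to v_{k-1}$; but $\out(z)\le 1$ before the iteration and the insertion of $x\to y$ did not touch it (as $x\ne z$), so $\out(z)\le 2$ after the flip. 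Every vertex not on $P$ is untouched by the flip. Hence the invariant is restored at the end of the iteration, completing the induction. Since the edges of $A_1$ are in bijection with the arcs of $E'$, this yields $|A_1|=|E'|=\sum_x\out(x)\le 2|V|$, as desired.

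I expect the only delicate points — the ones I would write out most carefully — to be the degenerate interactions between the newly inserted arc $x\to y$ and the flipped path $P$: that the flip case forces $x\ne z$, that $x\to y$ is not the arc $v_0\to v_1$, and that even if $y$ coincides with some later vertex $v_i$ of $P$ the flip rewires only the arcs lying on $P$, so the local degree bookkeeping above is unaffected. Everything else is an elementary degree count, and notably the length bound $k\le\log g$ built into $\Gamma(x)$ is not needed here (it is used only later, to bound $|A_2|$ and the stretch).
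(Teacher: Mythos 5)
Your proof is correct and follows essentially the same route as the paper's: induction over the iterations of Phase~1, with the observation that flipping the $x$--$z$ path decrements $\out(x)$ by one, increments $\out(z)$ by one (which is safe since $\out(z)\le 1$), and leaves the out-degrees of the internal path vertices unchanged. The extra degenerate-case checks you flag (that $x\ne z$ in the flip case, that $x\to y$ is not the first path arc, etc.) are details the paper elides but do not change the argument.
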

\begin{proof}
We prove the lemma by induction on the iterations of the algorithm.
  The base case ($E' = \emptyset$) is trivial. The interesting case is
  when $\out(x) = 2$ at the beginning of the iteration and the
  algorithm adds $(x,y)$ to $E'$ oriented from $x$ to $y$. At this
  point, we have $\out(x) = 3$, $\out(z) \leq 1$ and all other
  terminals on the $x - z$ path have out-degree at most $2$ by the
  inductive hypothesis. When the algorithm flips the arcs on the path,
  it decrements $\out(x)$ by $1$, increments $\out(z)$ by $1$ and does
  not affect the out-degrees of other terminals on the path. This proves
  the lemma.
\end{proof}

Next we bound $|A_2|$.

  


\begin{lemma}
  \label{lem:A2}
  $|A_2| \leq |V|$.
\end{lemma}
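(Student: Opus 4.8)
The plan is to deduce the lemma from the claim that, when Phase~1 terminates, the subgraph $\MST \cup A_1$ has at most $|V|$ unsatisfied terminals; since Phase~2 adds exactly one satisfying edge per unsatisfied terminal, this gives $|A_2| \le |V|$. Two easy structural observations come first. Phase~1 terminates, because each successful iteration adds a fresh arc to $E'$ and by Lemma~\ref{lem:outdeg} we have $\sum_v \out(v) = |E'| \le 2|V|$ throughout. And at termination, negating the while-condition and using $\out \le 2$ shows that for every terminal $x$ that is still unsatisfied in $\MST \cup A_1$, every vertex $z \in \Gamma(x)$ has $\out(z)=2$; in particular $x$ itself has out-degree $2$, and its entire out-ball of radius $\log g$ in $E'$ is saturated.

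The heart of the proof is a charging that maps the unsatisfied terminals injectively into $V$. I would charge an unsatisfied terminal $x$ to a vertex of $\Gamma(x)$ and invoke Hall's theorem, reducing the task to verifying Hall's condition: any set of $m$ unsatisfied terminals has at least $m$ distinct vertices in the union of their radius-$\log g$ out-balls. To prove this I expect to combine three ingredients: (i) every vertex lying inside one of these balls has out-degree exactly $2$, so the balls genuinely branch as one explores them; (ii) $E'$ is an orientation of a simple subgraph of $G$ (each edge of $A_1$ is oriented exactly once), so $k$ vertices carry at most $\binom{k}{2}$ arcs, which forces a saturated reachable set that is closed under out-neighbours to be reasonably large and, together with (i), forces a non-closed ball to keep growing for all $\log g$ steps; and (iii) the defining property of an unsatisfied terminal, that it lies at distance more than $2\log g$ in $\MST \cup A_1$ from every earlier terminal of its own group, which rules out the degenerate case in which a ball collapses onto a tiny strongly connected gadget (such a gadget has small diameter, so a terminal inside it would be close to the rest of its group).

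The step I expect to be the main obstacle is exactly this combinatorial core: showing that the saturated out-balls of the unsatisfied terminals are collectively large enough to admit a system of distinct representatives. The difficulty is that two different unsatisfied terminals may have heavily overlapping, or even identical, $\Gamma$-balls, so one cannot simply exhibit disjoint large balls; the accounting must be global, either through Hall's condition as above or through a discharging argument over the arcs of $E'$. A secondary subtlety to handle with care is that a single vertex may be a terminal of several groups, so "unsatisfied terminal'' must be counted in a way that still bounds Phase~2's additions by $|V|$. Once the count is at most $|V|$, the lemma follows at once, and together with $|A_1| = |E'| \le 2|V|$ and $|V| \le 2|\MST|$ it yields $|A| = |A_1| + |A_2| \le 6|\MST|$, as needed in Lemma~\ref{thm:uniform}.
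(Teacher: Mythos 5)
Your high-level plan is right --- bound the number of terminals left unsatisfied after Phase~1 by $|V|$, using the fact that every $z\in\Gamma(x)$ of such a terminal $x$ has $\out(z)=2$ --- but the combinatorial core you flag as ``the main obstacle'' is genuinely missing, and the ingredients you propose for it would not close the gap. Your ingredient (ii), that $k$ vertices carry at most $\binom{k}{2}$ arcs, only forces a saturated out-closed set to have $2k\le\binom{k}{2}$, i.e.\ $k\ge 5$; it cannot force $|\Gamma(x)|$ to be large, and your ingredient (iii) does not rescue this: the unsatisfied condition only separates $x_{j,i}$ from \emph{earlier terminals of its own group}, so a small saturated gadget containing no such earlier terminal is not ruled out. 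The idea that actually makes the balls large is a \emph{girth bound}: the undirected girth of $E'$ is at least $\log g$, proved by observing that when the last arc $(x,y)$ of a short cycle was added, the remaining arcs already gave a short $x$--$y$ path in $A_1$, contradicting that $x$ was unsatisfied at that moment. Girth $\ge\log g$ together with $\out\equiv 2$ on $\Gamma(x)$ yields a full binary tree of depth $\log g$ rooted at $x$, hence $|\Gamma(x)|\ge 2^{\log g}=g$. Without this you have no lower bound on the balls to feed into Hall's condition.

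The second missing piece is how to handle overlapping balls, which you correctly identify as the difficulty but leave unresolved. The paper does not need Hall's theorem or an injection: for two unsatisfied terminals $x_{j,i},x_{j,k}$ of the \emph{same} group, $\Gamma(x_{j,i})\cap\Gamma(x_{j,k})=\emptyset$ (otherwise $d_{\MST\cup A_1}(x_{j,i},x_{j,k})\le 2\log g$ and the later one would be satisfied), so each vertex lies in at most one ball per group, giving $\sum_{x\in U}|\Gamma(x)|\le g|V|$; combined with $|\Gamma(x)|\ge g$ this yields $|U|\le|V|$ by direct double counting. This also disposes of your secondary worry about a vertex being a terminal of several groups: the factor $g$ of multiplicity is exactly cancelled by the factor $g$ in the ball size. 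As written, your proposal reduces the lemma to a statement you do not prove and whose proof requires these two ideas you have not identified.
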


\begin{proof}
  First we prove that, even if we ignore edge directions, the length
  of the smallest cycle (i.e. girth) in $E'$ is at least $\log g$.
  Assume, towards a contradiction, that there is an undirected cycle
  of length $k \leq \log g$ in $E'$. Let $(x,y)$ be the last arc added
  in the cycle. Before the algorithm added it, there is a path from
  $x$ to $y$ of length $k-1$ in $A$ corresponding to the other arcs in
  the cycle. This contradicts the condition for adding $(x,y)$; in
  particular, $x$ is not unsatisfied.

  Let $U = \{x_{j,i} : \text{$x_{j,i}$ unsatisfied in $\MST \cup A_1$}\}$.
  For $x_{j,i} \in U$, we have $\out(z) = 2$ for all $z \in
  \Gamma(x_{j,i})$ since otherwise we would have added its satisfying
  edge in phase 1. Since the girth of $E'$ is at least $\log g$, 
  we have a full binary tree of depth $\log g$ rooted at $x_{j,i}$ in
  $E'$. This implies $|\Gamma(x_{j,i})| \geq g$. Furthermore, for any
  $x_{j,i}, x_{j,k} \in U$ with $i > k$, we have $\Gamma(x_{j,i}) \cap
  \Gamma(x_{j,k}) = \emptyset$ because otherwise 
  $d_{\MST \cup
  A_1}(x_{j,i}, x_{j,k}) \leq 2\log g$
  and $x_{j,i}$ would
  not have been unsatisfied in $\MST \cup A_1$. Therefore any terminal
  can belong to at most one $\Gamma(x_{j,i})$ per $j$, giving us
  $\sum_{x_{j,i} \in U} |\Gamma(x_{j,i})| \leq g |V|$. Hence we get
  the desired bound: $|V| \geq \sum_{x_{j,i} \in U} |\Gamma(x_{j,i})|
  / g \geq g |U| /g = |U| = |A_2|$.
  %
\end{proof}

Lemmas~\ref{lem:outdeg} and \ref{lem:A2} imply that $|A_1| + |A_2|
\leq 3 |V|$. Furthermore, the algorithm ensures that all the
terminals are satisfied in $\MST \cup A_1 \cup A_2$. Together with
Fact~\ref{fact:sat}, we get Lemma \ref{thm:uniform}.


\bibliographystyle{plain}
\bibliography{main,rand,algos,embedding,smartre}


\end{document}